\documentclass[a4paper,accepted=2018-06-20]{quantumarticle}
\pdfoutput=1
\usepackage[utf8]{inputenc}

\usepackage[T1]{fontenc}
\usepackage{microtype}

\usepackage{enumitem}

\setcounter{secnumdepth}{1}

\usepackage{amsthm,thm-restate}
\usepackage{amsmath,amsfonts}
\usepackage{IEEEtrantools}
\usepackage{mathtools}
\usepackage{mleftright}
\usepackage{braket}

\usepackage[
	style=phys,
	biblabel=brackets,
	autolang=hyphen,
	backend=biber,
	sorting=none,
	eprint=true,
	doi=true,
]{biblatex}
\usepackage{csquotes}

\addbibresource{ref.bib}

\newtheorem{theorem}{Theorem}
\newtheorem{corollary}[theorem]{Corollary}
\newtheorem{lemma}[theorem]{Lemma}

\renewcommand{\vec}{\mathbf}

\DeclarePairedDelimiter{\abs}{\lvert}{\rvert}
\DeclarePairedDelimiter{\norm}{\lVert}{\rVert}
\DeclarePairedDelimiter{\ceil}{\lceil}{\rceil}
\DeclarePairedDelimiter{\mean}{\langle}{\rangle}

\let\lsbkt[ \let\rsbkt]
\let\lparn( \let\rparn)
\DeclarePairedDelimiter{\coint}{\lsbkt}{\rparn}

\DeclarePairedDelimiter{\ccint}{\lsbkt}{\rsbkt}
\DeclarePairedDelimiter{\ooint}{\lparn}{\rparn}

\newcommand{\ketbra}[2]{\ket{#1}\!\bra{#2}}
\newcommand{\projop}[1]{\ketbra{#1}{#1}}

\newcommand{\eqp}{\,}

\newcommand{\rmin}{\mathrm{min}}

\newcommand{\rma}{\mathrm{a}}

\newcommand{\prep}{\mathrm{prep}}
\newcommand{\proj}{\pi}
\newcommand{\quant}{\mathrm{q}}
\newcommand{\compl}{\mathrm{c}}
\newcommand{\smooth}{\mathrm{s}}
\newcommand{\tangent}{\mathrm{t}}
\newcommand{\threshold}{\xi}

\newcommand{\junk}{\text{junk}}

\DeclareMathOperator{\tr}{Tr}

\newcommand{\id}{I}

\DeclareMathOperator{\idmap}{Id}

\newcommand{\oa}{{a}}
\newcommand{\ob}{{b}}
\newcommand{\sa}{{x}}
\newcommand{\sbb}{{y}}
\newcommand{\Oa}{{A}}
\newcommand{\Ob}{{B}}
\newcommand{\Sa}{{X}}
\newcommand{\Sb}{{Y}}

\newcommand{\kround}{n}
\newcommand{\kebit}{m}

\newcommand{\sA}{{\mathrm A}}
\newcommand{\sB}{{\mathrm B}}
\newcommand{\sE}{{\mathrm E}}

\newcommand{\success}{\mathcal{S}}
\newcommand{\Hmin}{H_\rmin}

\newcommand{\bigo}{O}
\newcommand{\bigomega}{\Omega}

\newcommand{\pguess}{p_{\text{guess}}}

\newcommand{\twocolumnswitch}{\ifdefstring{\thepagegrid}{mlt}}

\date{June 16, 2018}

\begin{document}
\title{Device-independent randomness generation with sublinear shared quantum resources}

\author{C\'edric Bamps}
\author{Serge Massar}
\author{Stefano Pironio}
\affiliation{Laboratoire d'Information Quantique, CP 224, Universit\'e libre de Bruxelles (ULB), 1050 Brussels, Belgium}

\begin{abstract}
In quantum cryptography, device-independent (DI) protocols can be certified secure without requiring assumptions about the inner workings of the devices used to perform the protocol.
In order to display nonlocality, which is an essential feature in DI protocols, the device must consist of at least two separate components sharing entanglement.
This raises a fundamental question: how much entanglement is needed to run such DI protocols?
We present a two-device protocol for DI random number generation (DIRNG) which produces approximately $\kround$ bits of randomness starting from $\kround$ pairs of arbitrarily weakly entangled qubits.
We also consider a variant of the protocol where $\kebit$ singlet states are diluted into $\kround$ partially entangled states before performing the first protocol, and show that the number $\kebit$ of singlet states need only scale sublinearly with the number $\kround$ of random bits produced.
Operationally, this leads to a DIRNG protocol between distant laboratories that requires only a sublinear amount of quantum communication to prepare the devices.
\end{abstract}

\maketitle

\section{Introduction}
A quantum random number generation (RNG) protocol is device-independent (DI) if its output can be guaranteed to be random with respect to any adversary on the sole basis of certain minimal assumptions, such as the validity of quantum physics and the existence of secure physical locations \cite{AM16}.
The internal workings of the devices, however, do not need to be trusted.

Device-independence is made possible by exploiting the violation of a Bell inequality \cite{BCP+14}, which certifies the random nature of quantum measurement outcomes.
As a result, DIRNG protocols necessarily consume two fundamental resources: entangled states shared across separated devices and an initial public random seed that is uncorrelated to the devices and used to determine the random measurements performed on the entangled states.
Out of these two resources, a DIRNG protocol produces $\kround$ private random bits.

The initial random seed that is consumed can be of extremely low quantity or quality.
Indeed, $\kround$ private random bits can be produced starting from an initial string of uniform bits whose required length has gradually been reduced in a series of works \cite{PAM+10,VV12,CY13,MS14a}, culminating in the result that only a constant, i.e., independent of the output length $\kround$, amount of initial uniform random bits are required \cite{CY13,MS14a}.
Furthermore, the initial seed does not necessarily need to consist of uniform random bits, as it possible to design DIRNG protocols consuming an arbitrarily weak random seed characterized only by its total min-entropy~\cite{CSW14}.

What about entanglement, the second fundamental resource that is consumed in any DIRNG protocol?
This quantum resource usually consists of $\kebit$ copies $\ket{\psi}^{\otimes \kebit}$ of some bipartite entangled state $\ket{\psi}$ shared between two separated devices $\sA$ and $\sB$ that can be prevented at will from interacting with one another.
Though DIRNG protocols involve a single user, it is useful for exposition purposes to view these two devices as being operated by two agents, Alice and Bob, in two remote sublaboratories.
The $\kebit$ copies $\ket{\psi}^{\otimes \kebit}$ can either be stored prior to the start of the protocol inside quantum memories in Alice's and Bob's sublaboratories, or each copy $\ket{\psi}$ can be produced individually during each execution round of the protocol, say by a source located between Alice and Bob.

All existing protocols consume at best a linear amount $\kebit = \bigomega(\kround)$ of such shared entangled states $\ket{\psi}$, as they operate by separately measuring (in sequence or in parallel) each of these $\kebit$ copies, with each separate measurement yielding at most a constant amount of random bits.
Furthermore, the states $\ket{\psi}$ are typically highly entangled states---the prototypical example of a DIRNG protocol involves the measurement of $\kround$ maximally entangled two-qubit states $\ket{\phi^+}$, from each of which roughly $1$ bit of randomness can be certified using the CHSH inequality \cite{PAM+10}.

We will show that the consumption of entangled resources can be dramatically improved qualitatively and quantitatively.
First, we show---by analogy with the fact that the initial random seed does not need to consist of uniform bits---that highly entangled states are not necessary for DIRNG: instead of using $\kround$ copies of maximally entangled two-qubit pairs $\ket{\phi^+}$, $\kround$ random bits can be produced from $\kround$ copies of any partially entangled two-qubit pair $\ket{\psi_\theta} = \cos\theta \ket{00} + \sin\theta \ket{11}$ with $0<\theta\leq \pi/4$ (see Theorem~\ref{thm:minentropy} and Corollary~\ref{cor:rng-partial}).

We then turn this statement concerning the quality of the shared entangled resources into a quantitative statement about the amount of entanglement that needs to be consumed in a DIRNG protocol.
The $\kround$ copies of the partially entangled state $\ket{\psi_\theta}$ correspond to a total of $\kround S(\theta)$ ebits where $S(\theta) = h_2(\sin^2 \theta)$ is the entropy of entanglement of $\ket{\psi_\theta}$ expressed in terms of the binary entropy $h_2$.
Since $S(\theta)$ can be made arbitrarily low by considering sufficiently small values of $\theta$, the above result seems to suggest that the total amount $\kround S(\theta)$ of entanglement consumed can also be made arbitrarily small as a function of $\kround$ by considering sufficiently fast decreasing values for $\theta = \theta(\kround)$.
However, if it is true that for any given $\theta$, one can produce $\kround$ random bits from $\kround$ copies of $\ket{\psi_\theta}$ for any $\kround$ sufficiently large, the dependency between $\theta$ and $\kround$ cannot be chosen arbitrarily.
This essentially originates from the fact that as $\theta \to 0$ the robustness to noise of the corresponding states $\ket{\psi_\theta}$, which become less and less entangled, decreases and must be compensated by increasing the number $\kround$ of copies of the states $\ket{\psi_\theta}$ to improve the estimation phase of the protocol.
There is thus a tradeoff between $\theta$ and $\kround$, which we show can nevertheless result in a total amount of entanglement $\kround S(\theta) = \bigomega(\kround^k\log \kround)$ with $7/8<k<1$ (see Corollary~\ref{cor:ebit}).
This amount of entanglement is \emph{sublinear} in the number $\kround$ of output random bits, fundamentally improving over existing protocols for which the entanglement consumption is at best linear.

Though the protocol that we introduce consumes a sublinear amount of entanglement, it still requires a linear number of shared quantum resources in the form of $\kround$ copies of the two-qubit entangled states $\ket{\psi_\theta}$.
These shared entangled states must be established through some quantum communication between Alice's and Bob's sublaboratories, either during the protocol itself or prior to the protocol, and will thus require the exchange of $\kround$ qubits.
Since this quantum communication will typically be costly (for instance because of high losses in the communication channel), it represents a measure of the use of shared quantum resources which is more operational and better motivated than the entropy of entanglement.
From this perspective, however, our first protocol is not fundamentally different from existing protocols that also involve the exchange of $\kround$ qubits to produce $\kround$ random bits.

This leads us to consider a slight modification of our protocol in which Alice and Bob initially share $\kebit$ maximally entangled two-qubit states $\ket{\phi^+}$, which can be established through the exchange of $\kebit$ qubits.
These singlets are then transformed by entanglement dilution \cite{BBPS96} into roughly $\kround = S(\theta)/\kebit$ copies of $\ket{\psi_\theta}$ states through local operations and classical communication (LOCC), which are then used in our regular protocol.

However entanglement dilution is only noiseless asymptotically, in the limit of an infinite number of copies $\kebit \to \infty$.
For finite $\kebit$, entanglement dilution is inherently noisy.
As our protocol is increasingly sensitive to noise as the degree of entanglement of the states $\theta$ tends to $0$, it is not a priori obvious that combining randomness generation with entanglement dilution will work.

Nevertheless we show that such a two-step protocol works even though the entanglement dilution slightly degrades the tradeoff between $\theta$ and $\kround$.
Specifically we exhibit a protocol that can get $\kround$ output random bits starting from a sublinear number $\kebit = \kround S(\theta) = \bigomega(\kround^{k'}\log \kround)$ of initial copies of $\ket{\phi^+}$ states, with $7/8<k'<1$.
This represents a quantitative improvement of the use of quantum resources with respect to all existing protocols, analogous to the fact that a DIRNG protocol needs only a sublinear amount of uniform random bits.

The starting point of our work is the work \cite{AMP12} wherein a family of variants of the CHSH inequality, the tilted-CHSH inequalities, are introduced, which seem particularly suited to generate randomness from weakly entangled qubit states.
Indeed, it was shown in \cite{AMP12} that maximal violation of a tilted CHSH inequality certifies one bit of randomness and can be achieved by entangled two-dimensional systems with arbitrarily little entanglement%
\footnote{%
Note that not all weakly entangled states can be used for device-independent randomness generation: for instance there is a regime of visibility in which noisy singlet states (so-called Werner states) are entangled but incapable of displaying nonlocality, and hence also incapable of displaying randomness \cite{Wer89}.
}
This was later extended to show that by using sequential measurements, a single pair of entangled qubits in a pure state could certify an arbitrary amount of randomness \cite{CJA+17}.
However neither of these works presented a protocol, including an estimation phase and security analysis taking into account non-maximal violation, for device independent randomness generation.
In fact the results of \cite{AMP12,CJA+17} do not by themselves imply the existence of such a protocol.

We now recall the tilted-CHSH expressions of \cite{AMP12}, whose properties of randomness certification in weakly entangled states will play a central role in our protocol.

\subsection*{Tilted-CHSH game}
The tilted-CHSH expressions $I_1^\beta$ are a family of Bell expressions introduced in \cite{AMP12} and parameterized by a tilting parameter $\beta\in\coint{0,2}$.
We start by reformulating $I_1^\beta$ as a nonlocal game, expressed in terms of a predicate function $V\in \{0,1\}$.
This will put us in the right conditions to apply the entropy accumulation theorem of \cite{DFR16} following \cite{AFVR16}.
In this reformulation,
Alice is given input $\sa \in \{0,1\}$ and Bob input $\sbb \in \{0,1,2\}$ according to the joint distribution
\begin{equation}
\label{eq:game-distribution}
	p(\sa,\sbb) = \begin{cases}
		\frac{1}{4+\beta}      &  (\sa,\sbb) \in \{0,1\}^2 \eqp, \\
		\frac{\beta}{4+\beta}  &  (\sa,\sbb) = (0,2) \eqp, \\
		0                      &  \text{otherwise.}
	\end{cases}
\end{equation}
Alice and Bob then provide one answer each, $(\oa,\ob) \in \{0,1\}^2$ respectively, and the game is won if the following predicate function $V(\oa,\ob,\sa,\sbb) \in \{0,1\}$ returns $1$:
\begin{equation}
\label{eq:game-predicate}
	V(\oa,\ob,\sa,\sbb) = \begin{cases}
		1  & (\sa,\sbb) \in \{0,1\}^2 \text{ and } \oa \oplus \ob = \sa \sbb \eqp, \\
		1  & (\sa,\sbb) = (0,2) \text{ and } \oa = 0 \eqp, \\
		0  & \text{otherwise.}
	\end{cases}
\end{equation}
Note that in our reformulation of the tilted-CHSH expression as a game, we have introduced for convenience a third setting for Bob ($y=2$) that is absent in the original tilted-CHSH expression.
This game can be understood as a convex combination between the CHSH game and a ``trivial'' game: the former's success criterion is $\oa \oplus \ob = \sa \sbb$ with input probabilities $p(\sa,\sbb) = 1/4$ for $(\sa,\sbb) \in \{0,1\}^2$, while the latter's success criterion is $\oa = 0$ with a deterministic input $(\sa,\sbb) = (0,2)$.
While Bob can tell the two games apart from his input $y$ thanks to the introduction of the third setting $y=2$, from Alice's point of view they are not distinguishable.
This makes the mixture of the CHSH game with the trivial game nontrivial.

Given the predicate function (\ref{eq:game-predicate}), it can easily be verified that the expected winning probability $\omega$ for the tilted-CHSH game is linked to the expectation value $\bar I_1^\beta$ of the tilted-CHSH expression through
\begin{align}
\omega &= \sum_{\mathclap{\oa,\ob,\sa,\sbb}} V(\oa,\ob,\sa,\sbb) p(\sa,\sbb) p(\oa,\ob \mid \sa,\sbb) \\
	&= \frac12 + \frac{1}{8+2\beta} \bar I_1^\beta \label{eq:game-ineq}\eqp,
\end{align}
where $p(\oa,\ob \mid \sa,\sbb)$ are the probabilities characterizing Alice and Bob's outputs.

Note that when $\sbb=2$, Bob's output does not affect the outcome of the game, and Bob is free to provide any output.
We expect that it should be possible to reformulate our results without introducing Bob's third setting, but we have found it simplest to proceed as above in order to follow closely the results of \cite{AFVR16}, where non-local games are used rather than Bell inequalities.

From the relation (\ref{eq:game-ineq}) between the tilted-CHSH game and the tilted-CHSH expression, it follows from the results of \cite{AMP12} that the winning probability $\omega$ goes up to $1/2 + (2+\beta)/(8+2\beta)$ for classical devices, and $1/2 + \sqrt{8+2\beta^2}/(8+2\beta) = \omega_\quant$ for quantum devices.
This quantum value $\omega_\quant$ is uniquely achieved (up to local transformations and up to Bob's measurement operator for $y=2$) by a pair of devices implementing certain local measurements on a two-qubit partially entangled state $\ket{\psi_\theta} = \cos\theta \ket{00} + \sin\theta \ket{11}$ with $\tan(2\theta) = \sqrt{2/\beta^2 - 1/2}$ \cite{AMP12}.
We call this optimal pair of devices the \emph{reference devices} for the tilted-CHSH game of tilting parameter $\beta$.
In the following we will sometimes use $\theta$ as the game parameter instead of $\beta$; it is always understood that they are linked by the above relation.

One important feature of the reference devices, as highlighted in \cite{AMP12}, is that, for any $0<\theta\leq \pi/4$, Alice's measurement when $\sa = 1$ returns a uniformly distributed outcome $\oa \in \{0,1\}$ uncorrelated with the environment, i.e., one bit of ideal randomness.
Thus by separately measuring $\kround$ copies of the partially entangled state $\ket{\psi_\theta} = \cos\theta \ket{00} + \sin\theta \ket{11}$ according to the reference measurements, one could in principle generate $\kround$ bits of randomness for any $0<\theta\leq \pi/4$.

However, the results of \cite{AMP12} do not immediately imply this claim because they only apply to a single use of a quantum system that is known to achieve the maximal winning probability $\omega_\quant$ of the tilted-CHSH game.
Thus one should first embed the tilted-CHSH game in a proper DIRNG protocol in which no assumptions are made beforehand about the quantum systems, but where the amount of randomness generated is instead estimated from their observed behavior.
This requires in particular a robust version of the results of \cite{AMP12}, i.e., an assessment of the randomness produced by quantum devices achieving a suboptimal winning probability $\omega<\omega_\quant$.
Indeed, even ideal devices are not expected to achieve the quantum maximum when they are used a finite number $\kround$ of times because of inherent statistical noise.
We now address this by introducing an explicit DIRNG protocol based on the tilted-CHSH inequalities and a robust security analysis based on the entropy accumulation theorem (EAT) \cite{DFR16,AFVR16,ADF+18} and the self-testing properties of the tilted-CHSH inequalities introduced in \cite{BP15}.

\section{DIRNG protocol based on the tilted-CHSH game}
\label{sec:protocol}
Our protocol consists of the following steps:
\begin{enumerate}[noitemsep]
\item
  Select values for the following parameters:
\begin{itemize}[nosep]
\item The game parameter $\beta \in \coint{0,2}$;
\item The number of measurement rounds $\kround$;
\item The expected fraction of test rounds $\gamma$;
\item A success threshold $\omega_\quant - \xi$.
\end{itemize}
\item \label{item:protocol:round}
  Let $i=1$.
  Choose $T_i \in \{0,1\}$ independently at random such that $\Pr[T_i = 1] = \gamma$.
  If $T_i = 1$, perform a game round: measure the devices with settings $(\Sa_i,\Sb_i)$, selected at random according to the distribution given in \eqref{eq:game-distribution}, record the output $(\Oa_i,\Ob_i)$ and compute $C_i = V(\Oa_i,\Ob_i,\Sa_i,\Sb_i)$ according to \eqref{eq:game-predicate}.
  If $T_i = 0$, perform a generation round: measure the devices with $(\Sa_i,\Sb_i) = (1,0)$, record the output $(\Oa_i,\Ob_i)$ and let $C_i = \bot$.
\item
  Repeat step~\ref*{item:protocol:round} for $i = 2, \dotsc, \kround$.
\item \label{item:protocol:estimation}
  Finally, if $\sum_{i : C_i = 1} 1 \ge \kround \gamma (\omega_\quant - \xi)$, the protocol succeeds.
  Otherwise, it aborts.
\end{enumerate}
An immediate application of Hoeffding's inequality \cite{Hoe63} produces an upper bound on the \emph{completeness error} for this protocol, that is, the probability that the ideal devices fail the protocol:
\begin{lemma}
\label{lem:completeness}
Using the reference devices in the $\kround$ rounds, the completeness error for the protocol is bounded by
\begin{equation}\label{eq:comperr}
	\epsilon_\compl = \exp \mleft( -2 \kround (\gamma \xi)^2 \mright) \eqp.
\end{equation}
\end{lemma}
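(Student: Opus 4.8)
The plan is to recognize the completeness error as a large-deviation event for a sum of i.i.d.\ Bernoulli variables and to bound it with Hoeffding's inequality, as the statement anticipates. First I would introduce the count $W = \sum_{i : C_i = 1} 1 = \sum_{i=1}^{\kround} Z_i$ with $Z_i = \mathbf{1}[C_i = 1]$, so that the protocol aborts precisely on the event $\{W < \kround\gamma(\omega_\quant - \xi)\}$.

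The crucial structural point is that, when the reference devices are used in all $\kround$ rounds, the indicators $Z_1, \dots, Z_{\kround}$ are independent and identically distributed, each Bernoulli with mean $\gamma\,\omega_\quant$. Independence holds because round $i$ acts on its own fresh copy of $\ket{\psi_\theta}$, the bit $T_i$ and the settings $(\Sa_i, \Sb_i)$ are sampled independently across rounds, and the reference measurements are applied round by round; the mean is $\Pr[T_i = 1]$ times the conditional win probability of the reference devices, which by construction equals the quantum value $\omega_\quant$ of the tilted-CHSH game under the input distribution~\eqref{eq:game-distribution}. Hence $\mathbb{E}[W] = \kround\gamma\,\omega_\quant$ and the abort event coincides with $\{W - \mathbb{E}[W] < -\kround\gamma\xi\}$.

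Since the $Z_i$ take values in $[0,1]$ and are independent, Hoeffding's inequality then yields
\begin{equation}
\epsilon_\compl \;=\; \Pr\mleft[W < \kround\gamma(\omega_\quant - \xi)\mright] \;\le\; \exp\mleft(-\frac{2(\kround\gamma\xi)^2}{\kround}\mright) \;=\; \exp\mleft(-2\kround(\gamma\xi)^2\mright),
\end{equation}
which is exactly \eqref{eq:comperr}.

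The whole argument is essentially bookkeeping, and I do not expect a real obstacle. The only step that needs a careful word is the claim that the per-round win indicators are genuinely i.i.d.\ with mean $\gamma\,\omega_\quant$ for the reference devices---in particular that the interleaving of test and generation rounds, and the fact that the settings $(\Sa_i,\Sb_i)$ in a test round are themselves random, do not introduce any correlation between rounds or shift the per-round win probability away from $\omega_\quant$. Once this is spelled out, the Hoeffding bound closes the proof immediately.
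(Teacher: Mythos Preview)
Your argument is correct and is precisely the ``immediate application of Hoeffding's inequality'' the paper alludes to: identify the abort event as $\sum_i Z_i < \kround\gamma(\omega_\quant-\xi)$ with i.i.d.\ $Z_i\in\{0,1\}$ of mean $\gamma\omega_\quant$, and apply the one-sided Hoeffding bound with deviation $t=\kround\gamma\xi$. There is nothing to add.
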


\subsection{Soundness of the protocol}
We now establish the soundness of our protocol, that is, its ability to produce a positive amount of randomness with high probability given that the protocol did not abort.
The security of this protocol rests on three standard assumptions in the DI setting: that the devices and their environment obey the laws of quantum mechanics, that the random seed used to select inputs is independent from the devices, and that the two devices are unable to communicate during each round of the protocol.
Our analysis is based on the entropy accumulation theorem (EAT) \cite{DFR16} following closely its application to DIRNG in \cite{AFVR16}.

The EAT, as its name indicates, provides an estimate of the smooth min-entropy accumulated throughout a sequence of measurements.
It implies that the smooth min-entropy of the joint measurement outcomes of our protocol scales linearly with the number of rounds, with each round providing on average an amount of min-entropy roughly equivalent to the von Neumann entropy of a single round's outcome.
In order to use the EAT, it is first necessary to bound this single-round von Neumann entropy as a function of the expected probability of success $\omega$ in the tilted-CHSH game.
The following Lemma, which we derive in Appendix~\ref{app:robust} from the robust self-testing bounds for the tilted-CHSH inequality \cite{BP15}, provides a bound on the conditional min-entropy, which in turn bounds the conditional von Neumann entropy:

\begin{restatable}{lemma}{minentropylemma}
\label{lem:minentropy}
Let $\omega$ be the expected winning probability for the tilted-CHSH game with parameter $\beta$ of a pair of quantum devices, whose internal degrees of freedom can be entangled with the environment $E$.
Then the conditional min-entropy of the measurement outcome $\Oa$ for input $\Sa = 1$ is bounded as
\begin{equation}
 \Hmin(\Oa \mid E; \Sa=1) \geq 1 - \kappa \theta^{-4} \sqrt{\omega_\quant - \omega} \equiv g(\omega)
\end{equation}
with $\kappa \le 4 \sqrt{4+\beta} \, (4\sqrt2 + 61)/\ln 2 \leq 385 \sqrt{4+\beta}$.
\end{restatable}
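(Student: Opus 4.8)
The plan is to rewrite the conditional min-entropy in terms of Eve's optimal guessing probability, evaluate that probability for the reference devices, and then control the deviation from this ideal value using the robust self-testing bounds for the tilted-CHSH inequality of~\cite{BP15}. Write $\delta \coloneqq \sqrt{\omega_\quant - \omega} \ge 0$, let $\ket{\psi}$ be the joint state of the two devices purified by an environment register $E$ held by Eve, and let $\{A_1^\oa\}_{\oa\in\{0,1\}}$ be Alice's measurement operators for input $\Sa=1$. By the operational characterization of the conditional min-entropy, $\Hmin(\Oa\mid E;\Sa=1) = -\log_2 \pguess$ with
\begin{equation}
  \pguess = \max_{\{M^\oa\}} \sum_{\oa} \bra{\psi} A_1^\oa \otimes \id_\sB \otimes M^\oa \ket{\psi}\eqp,
\end{equation}
the maximum being over all POVMs $\{M^\oa\}$ on $E$. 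Since $A_1^\oa$ is an orthogonal projector, each summand equals $\bra{\psi^\oa} \id_{\sA\sB}\otimes M^\oa \ket{\psi^\oa}$, where $\ket{\psi^\oa} \coloneqq (A_1^\oa\otimes\id_{\sB E})\ket{\psi}$; so it suffices to upper bound $\pguess$.

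For the reference devices the value of $\pguess$ is exactly $1/2$: the $\Sa=1$ measurement is $\sigma_x$ acting on $\ket{\psi_\theta}$, whose marginal on Alice's side is diagonal in the computational basis, so each vector $(\bar A_1^\oa\otimes\id)\ket{\psi_\theta}$ (bars denoting reference operators) has squared norm $1/2$ and, as recalled above, the outcome is uncorrelated with $E$, so the post-measurement state on $E$ is a fixed $\ket{\junk}$ regardless of $\oa$; hence $\pguess = \tfrac12\,\bra{\junk}(M^0+M^1)\ket{\junk} = \tfrac12$ for every $\{M^\oa\}$. To make this robust I would invoke the self-testing theorem of~\cite{BP15}, which from the winning probability $\omega$ provides local isometries $\Phi_\sA,\Phi_\sB$ and an ancillary state $\ket{\junk}$ (now on the leftover registers together with $E$) such that, for each $\oa$,
\begin{equation}
  \norm*{(\Phi_\sA\otimes\Phi_\sB\otimes\id_E)\ket{\psi^\oa} - (\bar A_1^\oa\otimes\id)\ket{\psi_\theta}\otimes\ket{\junk}} \le c(\theta)\,\delta\eqp,
\end{equation}
where $c(\theta)$ is an explicit factor obtained by composing the state- and operator-deviation bounds of~\cite{BP15}.

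It then remains to propagate this deviation. Since isometries preserve inner products, one evaluates $\pguess$ after applying $\Phi_\sA\otimes\Phi_\sB\otimes\id_E$ and replaces each image $\Phi\ket{\psi^\oa}$ by the reference vector $(\bar A_1^\oa\otimes\id)\ket{\psi_\theta}\otimes\ket{\junk}$; by Cauchy--Schwarz, with $\norm{M^\oa}\le 1$ and $\norm{\psi^\oa}\le 1$, each term of the sum changes by at most $O(c(\theta)\,\delta)$, so $\pguess \le \tfrac12 + \kappa'\theta^{-4}\delta$ for an explicit constant $\kappa'$. Converting back,
\begin{equation}
  \Hmin(\Oa\mid E;\Sa=1) = -\log_2 \pguess \ge 1 - \log_2\!\left(1 + 2\kappa'\theta^{-4}\delta\right) \ge 1 - \frac{2\kappa'}{\ln 2}\,\theta^{-4}\delta\eqp,
\end{equation}
which has the claimed form with $\kappa = 2\kappa'/\ln 2$. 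Outside the near-optimal regime, where the right-hand side is negative, the statement is vacuous because the min-entropy of a bit is nonnegative.

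The main obstacle will be the quantitative bookkeeping: extracting from~\cite{BP15} the precise $\theta$-scaling of both the state-closeness and operator-closeness bounds, composing them in the right order (the accumulated powers of $\theta^{-1}$ being what produce the exponent $4$), and carrying every numerical prefactor through the Cauchy--Schwarz estimates and the final logarithmic conversion without slack, so as to land on $\kappa \le 385\sqrt{4+\beta}$. A lesser subtlety is handling Eve's register inside the isometry argument: $E$ sits as a passive subsystem of the junk register, and the only structural fact used is that, for the reference devices, the junk state does not depend on the outcome $\oa$.
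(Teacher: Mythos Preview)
Your proposal is correct and follows essentially the same route as the paper: express $\Hmin$ via the guessing probability, invoke the robust self-testing bounds of~\cite{BP15} to transfer to the reference system via a local isometry acting trivially on $E$, use that the reference outcome for $\Sa=1$ is uniform and decorrelated from the junk register, and convert back with $\log_2(1+x)\le x/\ln 2$. The only cosmetic difference is that the paper parametrizes $\pguess$ through the observable as $\tfrac12+\tfrac12\bra{\tilde\psi}\tilde A_1\otimes(M_0-M_1)\ket{\tilde\psi}$ and bounds this single correlator (using $\bra{\psi_\theta}A_1\ket{\psi_\theta}=0$), whereas you work projector-by-projector; this only affects the numerical constant by a small factor and is immaterial for the stated bound $\kappa\le 385\sqrt{4+\beta}$.
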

The behavior of the bound with respect to $\omega_\quant - \omega$ is optimal \cite{RUV12}, while numerical results suggest that the optimal dependency in $\theta$ is $\bigo(\theta^{-2})$ \cite{CJA+17}.
This will not, however, significantly affect our conclusions.

Using this bound in the EAT along the lines of \cite{AFVR16} (see Appendix~\ref{app:EAT}) yields the following theorem:

\begin{restatable}{theorem}{minentropytheorem}
\label{thm:minentropy}
\twocolumnswitch{
	\def\etab{&}
	\def\ebreak{\nonumber\\}
	\let\eenv\align \let\endeenv\endalign
}{
	\def\etab{}
	\def\ebreak{}
	\let\eenv\equation \let\endeenv\endequation
}
Let $\vec \Oa, \vec \Ob, \vec \Sa, \vec \Sb, \vec T, \vec C$ be the classical random variables output by the protocol, and $E$ the quantum side information of a potential adversary.
Let $\success = \success(\vec C)$ be the success event for the protocol.
Let $\epsilon'$, $\epsilon_\smooth$ be two positive error parameters.
Then, for any given pair of devices used in the protocol, either $\Pr[\success] \le \epsilon'$ or
\begin{equation}\label{eq:hmin}
\Hmin^{\epsilon_\smooth}(\vec \Oa \vec \Ob\mid \vec \Sa \vec \Sb \vec T E; \success) \ge \nu \tau \kround \eqp,
\end{equation}
where $\nu = 1 - \gamma (2+\beta)/(4+\beta)$,
\begin{eenv}
	\tau = 1 \etab- \kappa \theta^{-4} \sqrt{\xi + \frac{2}{\gamma \sqrt \kround} \sqrt{1 - 2 \log_2(\epsilon_\smooth \epsilon')}}
\ebreak
		\etab- \frac{2 \log_2 26}{\sqrt \kround} \sqrt{1 - 2 \log_2(\epsilon_\smooth \epsilon')} \eqp,
\label{eq:rate}
\end{eenv}
and $\Hmin^{\epsilon_\smooth}(\vec \Oa \vec \Ob\mid \vec \Sa \vec \Sb \vec T E; \success)$ is the ${\epsilon_\smooth}$-smooth min-entropy of the output $(\vec \Oa, \vec \Ob)$ given $\vec \Sa, \vec \Sb,\vec T, E$ and conditioned on the event $\success$.
\end{restatable}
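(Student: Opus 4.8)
The plan is to derive \eqref{eq:hmin}--\eqref{eq:rate} as an application of the entropy accumulation theorem (EAT) of~\cite{DFR16}, following the device-independent analysis of~\cite{AFVR16}, with Lemma~\ref{lem:minentropy} supplying the single-round entropy estimate. First I would recast the protocol in the EAT formalism: for each round $i=1,\dots,\kround$ one introduces an EAT channel that takes the joint quantum memory of the two devices together with the adversary's system as input, samples the test flag $T_i$ and the inputs $(\Sa_i,\Sb_i)$ from fresh local randomness --- according to \eqref{eq:game-distribution} if $T_i=1$ and deterministically $(1,0)$ if $T_i=0$ --- feeds them to the devices, records the outputs $(\Oa_i,\Ob_i)$, computes $C_i\in\{0,1,\bot\}$ via \eqref{eq:game-predicate}, and passes the post-measurement state on as the memory of round $i+1$. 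The Markov-chain conditions required by the EAT hold for exactly the reasons given in~\cite{AFVR16}: the inputs of each round are generated independently of everything that precedes it, and the two devices cannot signal to each other within a round. The success event $\success$ depends only on the empirical frequency distribution of $\vec C$, so it is of the form the EAT can condition on.

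The core of the argument is the construction of an affine min-tradeoff function. Applying Lemma~\ref{lem:minentropy} to the conditional state of round $i$ given the past, and using that the conditional von Neumann entropy is at least the conditional min-entropy, a round with input $\Sa_i=1$ contributes at least $g(\omega)$ to $H(\Oa_i\Ob_i\mid\Sa_i\Sb_i T_iE R_{i-1})$, where $\omega$ is the conditional winning probability, while a round with $\Sa_i\neq1$ contributes at least $0$; here one uses that Alice's output is produced from her input and her share of the state only, so the bound of Lemma~\ref{lem:minentropy} for input $1$ applies verbatim whether the round is a test or a generation round, and conditioning additionally on $\Sb_i,T_i$ does not reduce the entropy. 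Since $\Pr[\Sa_i=1]=(1-\gamma)+\gamma\frac{2}{4+\beta}=\nu$, this yields the single-round bound $H(\Oa_i\Ob_i\mid\Sa_i\Sb_i T_iE R_{i-1})\ge\nu\, g(\omega)$, with $\omega$ read off the induced distribution of $C_i$ through $\Pr[C_i=1]=\gamma\omega$. The function $g(\omega)=1-\kappa\theta^{-4}\sqrt{\omega_\quant-\omega}$ is increasing and \emph{convex} on $\omega\le\omega_\quant$, so --- regarding it as an affine function of $\Pr[C_i=1]$ --- the tangent line to $\nu g$ at a point $\omega_\quant-\xi-\zeta$ lies everywhere below $\nu g$ and is thus a legitimate min-tradeoff function, with slope of order $\kappa\theta^{-4}/(\gamma\sqrt{\xi+\zeta})$.

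Feeding this min-tradeoff function into the EAT gives, in the case $\Pr[\success]>\epsilon'$ (the complementary case being the first alternative of the theorem), a bound of the form $\kround\, t-c\sqrt{\kround}$, where $t$ is the infimum of the min-tradeoff function over frequency distributions compatible with $\success$ --- attained when the empirical frequency of $\{C_i=1\}$ sits at its threshold value $\gamma(\omega_\quant-\xi)$ --- and $c$ splits into a part depending on the output register dimension and a part proportional to the gradient of the min-tradeoff function. The key bookkeeping move is to fix the slack $\zeta$ self-consistently: taking $\zeta=\frac{2}{\gamma\sqrt\kround}\sqrt{1-2\log_2(\epsilon_\smooth\epsilon')}$ makes the gradient-proportional part of $c\sqrt{\kround}$ cancel the excess of $\kround\, t$ over $\kround\,\nu\, g(\omega_\quant-\xi-\zeta)$, leaving only the dimension-dependent contribution, which yields the term $\frac{2\log_2 26}{\sqrt{\kround}}\sqrt{1-2\log_2(\epsilon_\smooth\epsilon')}$ (the numerical constant absorbing the output register dimension and the integer rounding of the gradient). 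Replacing $\Pr[\success]$ by $\epsilon'$ inside $\log_2(\epsilon_\smooth\Pr[\success])$ is legitimate precisely because $\Pr[\success]>\epsilon'$, and is what produces the dichotomy in the statement. Collecting terms (and bounding $\nu\le1$ where convenient) gives $\Hmin^{\epsilon_\smooth}(\vec\Oa\vec\Ob\mid\vec\Sa\vec\Sb\vec T E;\success)\ge\nu\tau\kround$ with $\tau$ as in~\eqref{eq:rate}.

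I expect the main obstacle to be exactly this tension between the steepness of $g$ near the Tsirelson point $\omega_\quant$ and the EAT's second-order term: because $g'$, hence the slope of any tangent min-tradeoff function anchored near the threshold, blows up like $(\xi+\zeta)^{-1/2}$, one cannot simply anchor at $\omega_\quant-\xi$; the anchor point --- equivalently the slack $\zeta$ --- must be tuned against the very correction it induces, which is what forces the nested structure ``$\sqrt{\xi+\zeta}$ with $\zeta\propto 1/(\gamma\sqrt{\kround})$'' in~\eqref{eq:rate}. The remaining ingredients --- verifying the Markov conditions, passing from min-entropy to von Neumann entropy, and carrying the conditioning on $\success$ through --- are routine and follow~\cite{AFVR16} line by line.
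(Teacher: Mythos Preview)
Your proposal is correct and follows essentially the same route as the paper's proof: set up the EAT channels as in~\cite{AFVR16}, derive the single-round bound $H(\Oa_i\Ob_i\mid\Sa_i\Sb_i T_iE)\ge\nu\,g(p_1/\gamma)$ from Lemma~\ref{lem:minentropy}, linearize the convex $g$ by a tangent, feed into Lemma~9 of~\cite{AFVR16}, and optimize the anchor point. Your explicit choice $\omega_\tangent=\omega_\quant-\xi-\zeta$ with $\zeta=\tfrac{2}{\gamma\sqrt\kround}\sqrt{1-2\log_2(\epsilon_\smooth\epsilon')}$ is precisely the optimizer the paper alludes to when it writes ``optimizing over the point of tangency $\omega_\tangent$,'' and your cancellation picture is a clean way to see why this value is optimal.
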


Given such a bound on the smooth min-entropy, there exist efficient procedures to extract from the raw outputs of the protocol a string of close-to-uniform random bits whose length is of the order of $\Hmin^{\epsilon_\smooth}$, with the smoothing parameter $\epsilon_\smooth$ characterizing the closeness to the uniform distribution.

\subsection{Random bits from any partially entangled two-qubit state}
Theorem~\ref{thm:minentropy} directly implies the following corollary, which shows the possibility of generating one bit of randomness per arbitrarily weakly entangled qubit pair:
\begin{corollary}
\label{cor:rng-partial}
For any constant values of the protocol parameters $\theta$, $\xi$, and $\gamma$ such that $\kappa \theta^{-4} \sqrt\xi < 1$ and for sufficiently large $\kround$, the protocol has vanishing completeness error and it generates $\bigomega(\kround)$ bits of randomness from $\kround$ partially entangled states $\ket{\psi_\theta}$.
For $\xi$ and $\gamma$ approaching $0$, the production of randomness in the protocol is asymptotically equal to $\kround$.
\qed
\end{corollary}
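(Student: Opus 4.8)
The plan is to read the statement off directly from Lemma~\ref{lem:completeness} and Theorem~\ref{thm:minentropy}, holding the protocol parameters $\theta,\xi,\gamma$ (and hence $\beta$ and the constant $\kappa$) fixed and letting $\kround\to\infty$, with the smoothing and failure parameters $\epsilon_\smooth,\epsilon'$ chosen as fixed constants in $\ooint{0,1}$.

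First I would dispatch completeness. By Lemma~\ref{lem:completeness}, running the reference devices yields completeness error $\epsilon_\compl=\exp(-2\kround(\gamma\xi)^2)$. Since $\gamma\xi$ is a strictly positive constant, $\epsilon_\compl\to 0$; in particular, for all sufficiently large $\kround$ we have $\Pr[\success]\ge 1-\epsilon_\compl>\epsilon'$, so the alternative $\Pr[\success]\le\epsilon'$ in Theorem~\ref{thm:minentropy} is excluded for the reference devices. Next I would apply Theorem~\ref{thm:minentropy} to those devices: \eqref{eq:hmin} then gives $\Hmin^{\epsilon_\smooth}(\vec\Oa\vec\Ob\mid\vec\Sa\vec\Sb\vec TE;\success)\ge\nu\tau\kround$. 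Here $\nu=1-\gamma(2+\beta)/(4+\beta)$ is a constant bounded away from $0$ (indeed $\nu\ge 1/3$ since $(2+\beta)/(4+\beta)<2/3$ and $\gamma\le 1$). In $\tau$ from \eqref{eq:rate}, the factor $\sqrt{1-2\log_2(\epsilon_\smooth\epsilon')}$ is a constant, so both correction terms are $\bigo(1/\sqrt\kround)$ and vanish; thus $\tau\to 1-\kappa\theta^{-4}\sqrt\xi$, which is a strictly positive constant by the hypothesis $\kappa\theta^{-4}\sqrt\xi<1$. Hence for large $\kround$, $\tau\ge\tfrac12(1-\kappa\theta^{-4}\sqrt\xi)>0$ and $\nu\tau\kround=\bigomega(\kround)$. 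Invoking the quantum-proof extractor mentioned after Theorem~\ref{thm:minentropy}, one then obtains a string of close-to-uniform bits of length of order $\Hmin^{\epsilon_\smooth}$, hence $\bigomega(\kround)$ random bits, which together with the vanishing $\epsilon_\compl$ establishes the first part.

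For the asymptotic-rate claim I would observe that the extracted length is, up to lower-order terms, $\nu\tau\kround$ with $\nu\to 1$ as $\gamma\to 0$ and $\tau\to 1-\kappa\theta^{-4}\sqrt\xi\to 1$ as $\xi\to 0$ (and $\kappa\theta^{-4}\sqrt\xi<1$ is automatically satisfied once $\xi$ is small enough at fixed $\theta$). Taking $\kround\to\infty$ first and then $\gamma,\xi\to 0$, the rate of randomness production approaches $\kround$.

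I do not expect a genuine obstacle here; the only point requiring care is the coupling between the error parameters and the statistical corrections in $\tau$, and the order of limits. One must pick $\epsilon_\smooth,\epsilon'$ as constants (any fixed values in $\ooint{0,1}$ work) so that, on the one hand, the completeness bound rules out the $\Pr[\success]\le\epsilon'$ branch of Theorem~\ref{thm:minentropy} for large $\kround$, and, on the other hand, the $\sqrt{1-2\log_2(\epsilon_\smooth\epsilon')}$ factor stays bounded and is therefore suppressed by the $1/\sqrt\kround$ prefactors; only afterwards may one send $\gamma,\xi$ to $0$ to recover the rate $\kround$.
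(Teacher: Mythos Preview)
Your proposal is correct and matches the paper's treatment: the corollary is stated with a \qed and no explicit proof, being an immediate consequence of Lemma~\ref{lem:completeness} and Theorem~\ref{thm:minentropy} exactly as you describe. Your handling of the order of limits and of the constants $\epsilon_\smooth,\epsilon'$ is appropriate and fills in the details the paper leaves implicit.
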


\subsection{Sublinear entanglement consumption}
We now consider how, in an ideal implementation of our protocol, the amount of shared entanglement consumed is related to the amount of randomness produced.
For given $\kround$, the entanglement consumption obviously decreases with smaller values of $\theta$.
According to \eqref{eq:hmin} and \eqref{eq:rate}, the randomness produced, however, also decreases with smaller $\theta$, unless this decrease is compensated by a suitable choice of the parameters $\gamma$ and $\xi$.
Indeed, for small $\theta$, $\gamma$ should be made larger to increase the fraction of game rounds and better test the devices.
Similarly, $\xi$ should be smaller (i.e., the threshold for the protocol's success should be set higher) in order for Lemma~\ref{lem:minentropy} to certify a nontrivial amount of min-entropy.
But the parameters $\gamma$ and $\xi$ also appear in the completeness error \eqref{eq:d-completeness} and thus cannot be set completely freely if this error is to remain small: setting $\gamma$ too low makes the estimation of the success rate at step \ref{item:protocol:estimation} of the protocol more uncertain,%
\footnote{%
From the perspective of randomness generation, a small value of $\gamma$ is desirable as it increases the factor $\nu$ in \eqref{eq:hmin} by increasing the rate of generation rounds.
However, game rounds also contribute to the final randomness, which makes the choice of $\gamma = 1$ possible.
}
and setting $\xi$ too low makes the threshold harder to reach.
In the following corollary to Theorem~\ref{thm:minentropy}, we show that there exists a choice for the parameters $\theta$, $\xi$, and $\gamma$, expressed as functions of $\kround$, such that the consumption of ebits $\kebit$ is sublinear in the number of rounds $\kround$:
\begin{corollary}
\label{cor:ebit}
Let $\lambda_\threshold$, $\lambda_\gamma$, $\lambda_\theta$ be positive scaling parameters such that
\begin{gather}
\label{eq:tradeoff-entanglement}
	\lambda_\theta < 2\lambda_\threshold \eqp. \\
\label{eq:tradeoff-completeness}
	\lambda_\threshold + \lambda_\gamma < 1/2 \eqp,
\end{gather}
Let $\theta = \kround^{-\lambda_\theta/16}$, $\xi = \kround^{-\lambda_\threshold}$, $\gamma = \kround^{-\lambda_\gamma}$, and constant $\epsilon_\smooth$ and $\epsilon'$.
Then, for $\kround \to \infty$, the entropy bound of Theorem~\ref{thm:minentropy} is asymptotically equal to $\kround$,
the completeness error vanishes,
and the amount of entanglement consumed is sublinear:
\begin{equation}
\label{eq:asymptotic-n-m}
	\kebit = \kround S(\theta) \sim \kround \theta^2\log_2\theta^{-2} =
	\frac{\lambda_\theta}{8} \kround^{k} \log_2 \kround \eqp,
\end{equation}
with $k = 1 - \lambda_\theta/8 \in \ooint{7/8, 1}$.
\qed
\end{corollary}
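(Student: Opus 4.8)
The plan is to substitute the power-law choices $\theta = \kround^{-\lambda_\theta/16}$, $\xi = \kround^{-\lambda_\threshold}$, $\gamma = \kround^{-\lambda_\gamma}$ (with $\epsilon_\smooth,\epsilon'$ constant) into the three objects of interest---the completeness error of Lemma~\ref{lem:completeness}, the entropy bound $\nu\tau\kround$ of Theorem~\ref{thm:minentropy}, and the entanglement cost $\kebit = \kround S(\theta)$---and to track the exponent of $\kround$ in each, reading off from the hypotheses \eqref{eq:tradeoff-entanglement}--\eqref{eq:tradeoff-completeness} which correction terms are forced to vanish. Two preliminary observations keep the bookkeeping clean. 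First, as $\theta \to 0^+$ the relation $\tan(2\theta) = \sqrt{2/\beta^2 - 1/2}$ forces $\beta \to 2^-$, so along the curve $\theta = \theta(\kround)$ the parameter $\beta$ stays in a compact subset of $\coint{0,2}$; hence $\kappa \le 385\sqrt{4+\beta}$ and the coefficient $(2+\beta)/(4+\beta)$ in $\nu$ are bounded uniformly in $\kround$. Second, $\sqrt{1 - 2\log_2(\epsilon_\smooth\epsilon')}$ is a constant and drops out of the exponent counting. (One also checks that $\theta \in \ocint{0,\pi/4}$, $\gamma \in \coint{0,1}$, and $\omega_\quant - \xi$ lies above the classical value for all large $\kround$, so the protocol is admissible.)

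The completeness statement is immediate: by Lemma~\ref{lem:completeness}, $\epsilon_\compl = \exp(-2\kround(\gamma\xi)^2) = \exp(-2\kround^{1 - 2(\lambda_\gamma + \lambda_\threshold)})$, which tends to $0$ precisely when $\lambda_\gamma + \lambda_\threshold < 1/2$, i.e.\ under \eqref{eq:tradeoff-completeness}. For the entropy bound I would show $\nu\tau \to 1$. The factor $\nu = 1 - \gamma(2+\beta)/(4+\beta)$ tends to $1$ since $\gamma = \kround^{-\lambda_\gamma} \to 0$. In $\tau$, the term $\tfrac{2\log_2 26}{\sqrt\kround}\sqrt{1 - 2\log_2(\epsilon_\smooth\epsilon')}$ is $\bigo(\kround^{-1/2})$; the self-testing term has radicand $\xi + \tfrac{2}{\gamma\sqrt\kround}\sqrt{1 - 2\log_2(\epsilon_\smooth\epsilon')}$ of order $\kround^{-\min(\lambda_\threshold,\,1/2-\lambda_\gamma)}$, so, carrying the factor $\theta^{-4} = \kround^{\lambda_\theta/4}$ and a bounded $\kappa$, the term is of order $\kround^{\lambda_\theta/4 - \frac12\min(\lambda_\threshold,\,1/2-\lambda_\gamma)}$. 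This vanishes iff $\lambda_\theta < 2\lambda_\threshold$ \emph{and} $\lambda_\theta < 1 - 2\lambda_\gamma$: the first is \eqref{eq:tradeoff-entanglement}, and the second follows by chaining the hypotheses, $\lambda_\theta < 2\lambda_\threshold < 2(1/2 - \lambda_\gamma) = 1 - 2\lambda_\gamma$. Hence $\tau \to 1$, the bound \eqref{eq:hmin} is asymptotically $\kround$, and in particular it is eventually positive, so for honest devices (for which Lemma~\ref{lem:completeness} rules out the ``$\Pr[\success]\le\epsilon'$'' alternative) the protocol produces $\sim\kround$ bits.

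For the entanglement cost I would invoke the small-argument asymptotics of the binary entropy: $\sin^2\theta \sim \theta^2$ and $h_2(p) \sim -p\log_2 p$ as $p \to 0$, so $S(\theta) = h_2(\sin^2\theta) \sim \theta^2\log_2\theta^{-2}$. With $\theta^2 = \kround^{-\lambda_\theta/8}$ and $\log_2\theta^{-2} = (\lambda_\theta/8)\log_2\kround$ this gives $\kebit = \kround S(\theta) \sim (\lambda_\theta/8)\,\kround^{1 - \lambda_\theta/8}\log_2\kround$, i.e.\ the claimed formula with $k = 1 - \lambda_\theta/8$. Finally $k < 1$ because $\lambda_\theta > 0$, and $k > 7/8$ because $\lambda_\theta < 2\lambda_\threshold < 1$ (chaining the hypotheses once more), so $k \in \ooint{7/8,1}$.

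I expect no genuine obstacle: the argument is exponent counting plus two standard ingredients (the small-$p$ behaviour of $h_2$, and the uniform boundedness of $\beta$---hence of $\kappa$---along $\theta = \theta(\kround)$). The only non-mechanical point is realising that making the self-testing correction in $\tau$ decay requires the auxiliary inequality $\lambda_\theta < 1 - 2\lambda_\gamma$, which is not assumed directly but is exactly what \eqref{eq:tradeoff-entanglement} and \eqref{eq:tradeoff-completeness} together yield---this is presumably why both tradeoff conditions appear in the statement.
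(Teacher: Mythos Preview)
Your proposal is correct and matches the paper's approach: the paper gives no explicit proof of this corollary (it ends the statement with \qed\ and only adds a short discussion explaining the roles of conditions \eqref{eq:tradeoff-entanglement} and \eqref{eq:tradeoff-completeness}), so you have simply written out the intended exponent-counting verification. Your observation that the auxiliary inequality $\lambda_\theta < 1 - 2\lambda_\gamma$ is needed for the second term in the radicand and follows by chaining \eqref{eq:tradeoff-entanglement} with \eqref{eq:tradeoff-completeness}, and your remark that $\beta \to 2$ keeps $\kappa$ uniformly bounded, are exactly the details the paper leaves implicit.
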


The constants $\lambda_\theta$, $\lambda_\xi$, $\lambda_\gamma$ give the rate at which the parameters $\theta$, $\xi$ and $\gamma$ tend to zero with increasing $\kround$.
The condition \eqref{eq:tradeoff-entanglement} expresses the fact that when the entanglement is small, the success threshold must be close to the maximum in order for the min-tradeoff function to take a nontrivial value (see Appendix~\ref{app:EAT}).
The condition \eqref{eq:tradeoff-completeness} expresses a tradeoff in the completeness error between how close the success threshold is to the maximum and the number of rounds that must be devoted to testing the correlations.
A larger fraction of game rounds (i.e., a larger $\gamma$) makes the success criterion fluctuate less, which allows for a higher threshold (i.e., a smaller $\xi$).

\section{Using diluted singlets}
\label{sec:dilution}
As mentioned in the introduction, the use of partially entangled states for randomness expansion enables us to reduce the amount of qubits exchanged between the devices when preparing their shared entanglement.
We reach this goal by applying our protocol to the outcome of an \emph{entanglement dilution} procedure, which transforms $\kebit$ singlet states $\ket{\phi^+}$ to $\kround \simeq \kebit / S(\theta)$ partially entangled states $\ket{\psi_\theta}$.
Thus, only $\kebit$ qubits need to be transferred between the devices in order to prepare the initial state $\ket{\phi^+}^{\otimes \kebit}$.

We use the procedure of Bennett et al.\ \cite{BBPS96}, in which Alice prepares the $\kround$ pairs locally, processes Bob's share with Schumacher compression then teleports them to Bob using the $\kebit$ singlets, who expands them back to $\kround$ qubits.
Since Schumacher compression is a lossy operation, the resulting state shared by Alice and Bob, which we denote as
$
	\mathcal D_{\theta,\delta}(\projop{\phi^+}^{\otimes \kebit})
$
is not exactly $\ket{\psi_\theta}^{\otimes \kround}$, but it is close in trace distance (with $\norm{\rho}_1 = \tr\abs\rho$):
\begin{restatable}{lemma}{dilutionlemma}
\label{lem:dilution}
\twocolumnswitch{
	\let\eenv\multline \let\endeenv\endmultline
}{
	\let\eenv\equation \let\endeenv\endequation
}
Using perfect devices, the dilution channel $\mathcal D_{\theta,\delta}$ maps $\kebit$ copies of the singlet $\ket{\phi^+}$ into $\kround$ copies of the partially entangled qubit state $\ket{\psi_\theta} = \cos\theta \ket{00} + \sin\theta \ket{11}$ with $\kebit = (S(\theta) + \delta) \kround$, up to error terms bounded by
\begin{eenv}
\label{eq:epsilonprep}
  \norm*{
    \mathcal D_{\theta,\delta}(\projop{\phi^+}^{\otimes \kebit})
      - \projop{\psi_\theta}^{\otimes \kround}
  }_1 \\
    \le 2 \sqrt{\epsilon_\proj} + \epsilon_\proj
    \equiv \epsilon_\prep
  \eqp,
\end{eenv}
with
\begin{gather}
\SwapAboveDisplaySkip
\label{eq:epsilon-proj}
  \epsilon_\proj = 2 \exp(-2 \kround \delta^2/\Delta^2) \eqp, \\
  \Delta = -\log_2 \tan^2 \theta \eqp.
\end{gather}
\end{restatable}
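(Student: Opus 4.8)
The plan is to reduce the statement to the standard error analysis of Schumacher compression, which is the only lossy ingredient in the procedure. First I would make the channel $\mathcal D_{\theta,\delta}$ explicit: Alice prepares $\projop{\psi_\theta}^{\otimes\kround}$ locally on registers $A^\kround B^\kround$, applies Schumacher compression to $B^\kround$, teleports the compressed register to Bob using the $\kebit$ input copies of $\ket{\phi^+}$, and Bob decompresses. Since the input singlets are perfect, the teleportation step is a noiseless identity channel on the compressed register, so the net action of $\mathcal D_{\theta,\delta}$ on the locally prepared state reduces to ``compress, then decompress'' applied to $B^\kround$. Everything then hinges on how faithfully this composite map acts on $\projop{\psi_\theta}^{\otimes\kround}$.

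Next I would set up the typical subspace. The reduced state $\rho_\theta = \tr_A \projop{\psi_\theta}$ is diagonal in the Schmidt basis with eigenvalues $\cos^2\theta$ and $\sin^2\theta$, so each eigenvalue of $\rho_\theta^{\otimes\kround}$ is a product of $\kround$ factors, each equal to $\cos^2\theta$ or $\sin^2\theta$; accordingly $-\tfrac1\kround\log_2$ of such an eigenvalue is the empirical mean of $\kround$ i.i.d.\ copies of a random variable taking value $-\log_2\cos^2\theta$ or $-\log_2\sin^2\theta$, with expectation $S(\theta)=h_2(\sin^2\theta)$ and range exactly $\Delta = -\log_2\tan^2\theta$. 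Letting $\Pi$ be the $\delta$-typical projector onto the span of eigenvectors for which this empirical mean lies within $\delta$ of $S(\theta)$, Hoeffding's inequality gives $\tr\bigl((\id-\Pi)\rho_\theta^{\otimes\kround}\bigr) \le 2\exp(-2\kround\delta^2/\Delta^2) = \epsilon_\proj$; and since every typical eigenvalue is at least $2^{-\kround(S(\theta)+\delta)}$ while they sum to at most $1$, the rank of $\Pi$ is at most $2^{\kround(S(\theta)+\delta)} = 2^\kebit$. Hence the typical subspace embeds isometrically into the $\kebit$-qubit register that is teleported, which is what fixes the relation $\kebit = (S(\theta)+\delta)\kround$.

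I would then model ``compress, then decompress'' as the channel $\mathcal E(\sigma) = \Pi\sigma\Pi + \tr\bigl((\id-\Pi)\sigma\bigr)\projop{\junk}$ on $B^\kround$, the second term collecting the overflow weight that is discarded and replaced by a fixed state. A direct computation (using that $\Pi$ is diagonal in the product Schmidt basis) gives $\mathcal D_{\theta,\delta}(\projop{\phi^+}^{\otimes\kebit}) = (\idmap\otimes\mathcal E)\bigl(\projop{\psi_\theta}^{\otimes\kround}\bigr) = \ketbra{\psi'}{\psi'} + \rho^\perp\otimes\projop{\junk}$, where $\ket{\psi'} = (\id\otimes\Pi)\ket{\psi_\theta}^{\otimes\kround}$ has squared norm $1-\epsilon_\proj$ and $\tr\rho^\perp = \epsilon_\proj$. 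By the triangle inequality the junk term contributes at most $\epsilon_\proj$ to the trace distance from $\projop{\psi_\theta}^{\otimes\kround}$, while the difference $\projop{\psi_\theta}^{\otimes\kround} - \ketbra{\psi'}{\psi'}$ is a rank-$\le 2$ operator supported on $\mathrm{span}\{\ket{\psi'},(\id\otimes(\id-\Pi))\ket{\psi_\theta}^{\otimes\kround}\}$, whose two nonzero eigenvalues can be written down explicitly and give trace norm $\sqrt{\epsilon_\proj}\sqrt{4-3\epsilon_\proj}\le 2\sqrt{\epsilon_\proj}$ (equivalently, this is the gentle-measurement bound specialized to a pure state). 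Summing the two contributions gives exactly the claimed bound $\epsilon_\prep = 2\sqrt{\epsilon_\proj} + \epsilon_\proj$.

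The main obstacle is not any individual estimate but the bookkeeping that keeps the constants tight: one must define the compress/decompress map carefully enough---in particular specify what happens to the overflow component---so that the error comes out as $2\sqrt{\epsilon_\proj}+\epsilon_\proj$ rather than a looser multiple of $\sqrt{\epsilon_\proj}$, and one must check that the dimension count of the typical subspace exactly matches the number $\kebit=(S(\theta)+\delta)\kround$ of singlets consumed by teleportation. The two places that deserve genuine care are the explicit rank-two trace-norm computation and the verification that teleportation with ideal singlets introduces no error beyond the compression step; the Hoeffding estimate, with the range $\Delta$ coming from the gap between $-\log_2\cos^2\theta$ and $-\log_2\sin^2\theta$, is then routine.
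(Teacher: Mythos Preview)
Your proposal is correct and follows essentially the same route as the paper: make the dilution channel explicit as local preparation plus Schumacher compression plus noiseless teleportation plus decompression, bound the atypical weight via Hoeffding with range $\Delta=-\log_2\tan^2\theta$ to get $\epsilon_\proj$, bound the typical-subspace dimension by $2^{\kround(S(\theta)+\delta)}$ to justify $\kebit=(S(\theta)+\delta)\kround$, and then control the compress--decompress error on the global pure state. The only cosmetic difference is that the paper invokes the gentle operator lemma \cite{Win99,ON02,Wil13} for the $2\sqrt{\epsilon_\proj}$ term, whereas you compute the rank-two trace norm by hand; you yourself note these are equivalent. One small imprecision: you write $\norm{\ket{\psi'}}^2=1-\epsilon_\proj$ and $\tr\rho^\perp=\epsilon_\proj$, but Hoeffding only gives inequalities, so these should be $\ge 1-\epsilon_\proj$ and $\le\epsilon_\proj$; the final bound is monotone in the right direction, so this does not affect the conclusion.
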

This lemma mostly follows from \cite{BBPS96,Wil13}; we prove it in Appendix~\ref{app:dilution}.

It follows from Lemma~\ref{lem:dilution} that even a perfectly implemented dilution procedure introduces some noise in the protocol.
We thus need to derive a new statement for the completeness error, the probability that perfect devices fail the protocol.
Using the indistinguishability interpretation of the trace distance, if the reference state $\ket{\psi_\theta}^{\otimes \kround}$ passes the threshold of the protocol with probability $1-\epsilon$, the diluted state $\mathcal D_{\theta,\delta}(\projop{\phi^+}^{\otimes \kebit})$, which is $\epsilon_\prep$-close to the reference, will pass the same threshold with probability at least $1-\epsilon - \tfrac12 \epsilon_\prep$ \cite{Wil13}.
Using the value of $\epsilon$ given in Lemma~\ref{lem:completeness} immediately implies the following:
\begin{lemma}
\label{lem:d-completeness}
Starting from $\kebit$ perfect singlets, the composition of the dilution procedure $\mathcal D_{\theta,\delta}$ and the randomness generation protocol has its completeness error bounded by
\begin{equation}
\label{eq:d-completeness}
	\epsilon_\compl = \tfrac12 \epsilon_\prep + \exp \mleft( -2 \kround (\gamma \xi)^2 \mright) \eqp.
\end{equation}
\end{lemma}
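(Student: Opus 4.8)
The plan is to combine Lemma~\ref{lem:completeness} with Lemma~\ref{lem:dilution} through the operational (indistinguishability) interpretation of the trace distance. First I would recall that any quantum channel followed by a binary measurement (here, the entire randomness generation protocol viewed as a measurement that either accepts or aborts) cannot distinguish two states better than their trace distance permits: if $\norm{\rho - \sigma}_1 \le \epsilon_\prep$, then for the ``abort'' event $\mathcal A$ one has $\abs{\Pr_\rho[\mathcal A] - \Pr_\sigma[\mathcal A]} \le \tfrac12 \epsilon_\prep$. This is the standard bound used in \cite{Wil13}, already invoked in the paragraph preceding the statement.

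Next I would set $\rho = \mathcal D_{\theta,\delta}(\projop{\phi^+}^{\otimes \kebit})$ and $\sigma = \projop{\psi_\theta}^{\otimes \kround}$. By Lemma~\ref{lem:dilution}, $\norm{\rho - \sigma}_1 \le \epsilon_\prep$ with $\epsilon_\prep = 2\sqrt{\epsilon_\proj} + \epsilon_\proj$ and $\epsilon_\proj = 2\exp(-2\kround\delta^2/\Delta^2)$. By Lemma~\ref{lem:completeness}, running the protocol on the ideal reference state $\sigma$ aborts with probability at most $\exp(-2\kround(\gamma\xi)^2)$. Applying the trace-distance bound to the abort event, the protocol run on the diluted state $\rho$ aborts with probability at most $\exp(-2\kround(\gamma\xi)^2) + \tfrac12\epsilon_\prep$. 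Since the completeness error is by definition this abort probability for perfect devices, this is exactly \eqref{eq:d-completeness}.

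There is essentially no hard step here — the statement is a one-line composition of two previously established lemmas. The only point requiring a little care is making sure that the protocol really does factor as ``prepare state, then apply a measurement'' so that the trace-distance contraction applies: the random choices $T_i$, $(\Sa_i,\Sb_i)$ and the local measurements on the devices are all part of this measurement map, and the accept/abort decision at step~\ref{item:protocol:estimation} is a deterministic function of its classical outputs, so the map from the shared state to the bit ``accept/abort'' is indeed a (randomized) quantum-to-classical channel. Once that is granted, monotonicity of trace distance under channels does the rest. I would therefore present the proof as: (i) invoke Lemma~\ref{lem:dilution} for $\epsilon_\prep$; (ii) invoke Lemma~\ref{lem:completeness} for the ideal abort probability; (iii) apply monotonicity of the trace distance under the accept/abort channel to conclude.
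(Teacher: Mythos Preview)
Your proposal is correct and matches the paper's argument essentially verbatim: the paper also derives the lemma by invoking the indistinguishability interpretation of the trace distance together with Lemma~\ref{lem:dilution} for the $\epsilon_\prep$ bound and Lemma~\ref{lem:completeness} for the ideal abort probability. Your additional remark that the protocol factors as a state preparation followed by a quantum-to-classical accept/abort channel is a welcome clarification, but the overall approach is identical.
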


The following analogue of Corollary~\ref{cor:ebit} applies to the composition of entanglement dilution and randomness expansion; it immediately follows from the chosen parameterization:
\begin{corollary}
\label{cor:d-ebit}
Let $\delta = S(\theta) c$ with $c = \kround^{\lambda_c/8}$ for some real parameter $\lambda_c$.
Let the parameters of the protocol be set as in Corollary~\ref{cor:ebit}, with the additional constraint that
\begin{equation}
\label{eq:tradeoff-dilution}
	0 < \lambda_c < \lambda_\theta \eqp.
\end{equation}
Starting from $\kebit$ singlets, the composition of entanglement dilution with parameters $\delta$ and $\theta$ with the randomness expansion protocol yields an entropy bound in Theorem~\ref{thm:minentropy} which is asymptotically equal to $\kround$ for $\kround \to \infty$,
with a vanishing completeness error,
and a sublinear consumption of entanglement:
\begin{multline}
	\kebit
		=    \kround \, (S(\theta)+\delta) \\
		\sim \kround^{1+\lambda_c/8}\theta^2\log_2\theta^{-2}
		=     \frac{\lambda_\theta}{8} \kround^{k'} \log_2 \kround \eqp,
\end{multline}
with $k' = 1 - (\lambda_\theta-\lambda_c)/8 \in \ooint{7/8, 1}$.
\qed
\end{corollary}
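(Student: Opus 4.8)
The plan is to verify the three asserted asymptotics by substituting the parameterization $\theta = \kround^{-\lambda_\theta/16}$, $\xi = \kround^{-\lambda_\threshold}$, $\gamma = \kround^{-\lambda_\gamma}$, $\delta = S(\theta)\,c$ with $c = \kround^{\lambda_c/8}$ into the bounds of Theorem~\ref{thm:minentropy}, Lemma~\ref{lem:d-completeness} and Lemma~\ref{lem:dilution}, and reading off the exponents of $\kround$. The only genuinely analytic inputs are the small-$\theta$ expansions $S(\theta) = h_2(\sin^2\theta) \sim \theta^2\log_2\theta^{-2}$ and $\Delta = -\log_2\tan^2\theta \sim \log_2\theta^{-2}$, which together give $\delta/\Delta \sim c\,\theta^2$; everything else is bookkeeping. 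I would first record that $\theta \to 0$ forces $\beta \to 2$ through $\tan(2\theta) = \sqrt{2/\beta^2-1/2}$, so $\beta \in \coint{0,2}$ stays bounded and hence so do $\kappa \le 385\sqrt{4+\beta}$ and $(2+\beta)/(4+\beta)$; I would also note that \eqref{eq:tradeoff-entanglement} and \eqref{eq:tradeoff-completeness}, together with $\lambda_\gamma>0$, force $\lambda_\theta < 2\lambda_\threshold < 1$.

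For the completeness error I would invoke Lemma~\ref{lem:d-completeness}: $\epsilon_\compl = \tfrac12\epsilon_\prep + \exp(-2\kround(\gamma\xi)^2)$. The Hoeffding term equals $\exp(-2\kround^{\,1-2(\lambda_\threshold+\lambda_\gamma)})$, whose exponent is positive by \eqref{eq:tradeoff-completeness}, so it vanishes; and from Lemma~\ref{lem:dilution}, $\epsilon_\proj = 2\exp(-2\kround\delta^2/\Delta^2)$ with $\kround\delta^2/\Delta^2 \sim \kround c^2\theta^4 = \kround^{\,1-(\lambda_\theta-\lambda_c)/4}$, whose exponent is positive because $0 < \lambda_\theta-\lambda_c < \lambda_\theta < 1 < 4$ (using \eqref{eq:tradeoff-dilution}). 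Hence $\epsilon_\proj \to 0$, so $\epsilon_\prep = 2\sqrt{\epsilon_\proj}+\epsilon_\proj \to 0$, and $\epsilon_\compl \to 0$. This is exactly the step where the new constraint $0<\lambda_c<\lambda_\theta$ is needed, and it is also the step I expect to be the main obstacle: one must check that the dilution noise, which Lemma~\ref{lem:dilution} controls only through the ratio $\delta/\Delta$, still decays at a power rate once $\theta$ is itself a shrinking power of $\kround$, and that the same power $\lambda_c$ keeps $\kebit$ sublinear.

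Given $\epsilon_\compl \to 0$ and $\epsilon'$ a fixed constant below $1$, for $\kround$ large the honest (diluted) device has $\Pr[\success] = 1-\epsilon_\compl > \epsilon'$, so Theorem~\ref{thm:minentropy} rules out its first alternative and yields $\Hmin^{\epsilon_\smooth} \ge \nu\tau\kround$. Since that theorem is device-agnostic, the dilution noise in the honest state changes neither $\nu$ nor $\tau$, so it only remains to show $\nu\tau \to 1$, which is the same computation as in Corollary~\ref{cor:ebit}: $\nu = 1-\gamma(2+\beta)/(4+\beta)\to 1$ as $\gamma\to 0$, and in \eqref{eq:rate} the term $\tfrac{2\log_2 26}{\sqrt\kround}\sqrt{1-2\log_2(\epsilon_\smooth\epsilon')}\to 0$ while $\kappa\theta^{-4}\sqrt{\xi+\tfrac{2}{\gamma\sqrt\kround}\sqrt{1-2\log_2(\epsilon_\smooth\epsilon')}} = \bigo\big(\kround^{\lambda_\theta/4}(\kround^{-\lambda_\threshold}+\kround^{\lambda_\gamma-1/2})^{1/2}\big)\to 0$ because both $\lambda_\theta/4-\lambda_\threshold/2<0$ (by \eqref{eq:tradeoff-entanglement}) and $\lambda_\theta/4+\lambda_\gamma/2-1/4<0$ (from $\lambda_\theta<2\lambda_\threshold$ and \eqref{eq:tradeoff-completeness}). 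Thus the entropy bound $\nu\tau\kround \sim \kround$.

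Finally, for the entanglement count, $\kebit = \kround(S(\theta)+\delta) = \kround S(\theta)(1+c) \sim \kround\,c\,S(\theta)$ since $c\to\infty$, and with $S(\theta)\sim\theta^2\log_2\theta^{-2}$, $c=\kround^{\lambda_c/8}$, $\theta^2=\kround^{-\lambda_\theta/8}$ and $\log_2\theta^{-2}=\tfrac{\lambda_\theta}{8}\log_2\kround$ this becomes $\kebit \sim \tfrac{\lambda_\theta}{8}\kround^{k'}\log_2\kround$ with $k' = 1-(\lambda_\theta-\lambda_c)/8$; the range $k'\in\ooint{7/8,1}$ follows from $0 < \lambda_\theta-\lambda_c < \lambda_\theta < 1$. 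I would close by noting that, apart from the dilution-noise estimate flagged above, the corollary really is immediate from the parameterization, each remaining inequality being a direct comparison of powers of $\kround$.
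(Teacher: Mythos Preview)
Your proposal is correct and matches the paper's approach: the paper states that the corollary ``immediately follows from the chosen parameterization'' and gives no further proof beyond the brief remarks after the statement explaining the role of the bounds on $\lambda_c$. You have simply spelled out the exponent comparisons that the paper left implicit, and your computations for the three claims (completeness error, entropy rate, and the asymptotic for $\kebit$) are all sound.
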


In addition to the tradeoffs \eqref{eq:tradeoff-entanglement} and \eqref{eq:tradeoff-completeness}, which we discussed after Corollary~\ref{cor:ebit}, the bound $\lambda_c > 0$ ensures that the completeness error vanishes (which requires that $\epsilon_\proj$, defined in \eqref{eq:epsilon-proj}, also vanishes).
The upper bound $\lambda_c < \lambda_\theta$ ensures that the dilution process increases the number of states, i.e., $\kround > \kebit$.

\section{Robustness to noise}
While we have shown above that the inherent noise associated to dilution is tolerated by our protocol, we implicitly assumed that the quantum devices themselves are noise-free.
Indeed the completeness error given by Eq.~\eqref{eq:comperr} is evaluated assuming quantum devices with an expected winning probability equal to the quantum maximum $\omega_\quant$.
If the quantum devices are noisy, for example due to faulty measurements, and have instead a suboptimal winning probability $\omega=\omega_\quant-\zeta$ with $\zeta < \xi$, then the completeness error becomes
\begin{equation}\label{eq:comperr2}
	\epsilon_\compl = \exp \mleft( -2 \kround \gamma^2 (\xi-\zeta)^2 \mright) \eqp.
\end{equation}
It is easy to see that a sublinear entanglement consumption remains possible with such noisy devices, provided the noise parameter $\zeta$ decreases with $\kround$ as $\zeta=\kround^{-\lambda_\zeta}$ for some suitable scaling parameter $\lambda_\zeta$.
However, realistic devices will be subject to a constant amount of noise, rather than an asymptotically vanishing one.
In this case, the protocol as described so far breaks down.
This is most easily seen from Lemma~\ref{lem:minentropy}: it is clear that $\theta$ can only be taken as low as values of the order of $(\omega_\quant-\omega)^{1/8}=\zeta^{1/8}$ to get a nontrivial bound on the min-entropy.
Nevertheless, given a small enough finite upper bound on the amount of noise $\zeta$, partially entangled qubit pairs with an appropriate value of $\theta < \pi/4$ can still be used to produce a linear amount of randomness with a yield per ebit higher than $1$ according to Theorem~\ref{thm:minentropy}, thus improving what can directly be achieved using maximally entangled states.

A sublinear consumption of entanglement using diluted singlets can be recovered even with devices whose components fail with constant probability if our protocol is combined with error correction and fault-tolerant quantum computation.
Indeed, according to the threshold theorem for fault-tolerant quantum computation, an arbitrary quantum circuit containing $G(n)$ gates may be simulated with probability of error $e(n)$ on hardware whose components fail with constant probability at most $p$, provided $p$ is below some threshold, through an encoding that increases the local dimension of each qubit by a factor $\operatorname{poly} \log G(n)/e(n)$ \cite{NC00}.
In our case, the number $G(\kround)$ of gates needed to perform entanglement dilution \cite{CD96} and the subsequent bipartite measurements is polynomial in $\kround$.
On the other hand, aiming for a probability of error $e(\kround)$ that decreases polynomially in $\kround$ for the simulating circuit yields a completeness error that vanishes asymptotically.
The number of ebits needed in such a fault-tolerant version of our protocol is then multiplied only by a factor $\operatorname{poly} \log G(\kround)/e(\kround)=\operatorname{poly} \log \kround$ resulting in a total number of ebits that is still sublinear in $\kround$, i.e., $\kebit\sim \kround^k \operatorname{poly} \log \kround$.

\section*{Discussion}
In summary, earlier work \cite{AMP12} showed that a pair of entangled qubits with arbitrarily little entanglement could be used to certify one bit of randomness.
Here we have carried out the further step of transforming the intuition of \cite{AMP12} into DIRNG protocols in which a sublinear amount of entanglement is consumed.
This shows that the consumption of entanglement resources in DIRNG can be dramatically improved qualitatively and quantitatively with respect to existing protocols.
These results about entanglement are analogous to those concerning the initial random seed, the other fundamental resource required in DIRNG.
Interestingly, the recent work \cite{CJA+17} suggests that one could devise DIRNG protocols that consume a constant amount of entanglement.
Whether the intuition of \cite{CJA+17} can be transformed into such a DIRNG protocol is an interesting open question.

In the present work, we did not attempt to minimize simultaneously the entanglement and the size of the initial seed.
(Note that in our protocol, the size of the initial random seed is determined by the parameter $\gamma$ specifying the proportion of test rounds.)
Nevertheless, in the parameter regimes of Corollaries~\ref{cor:ebit} and \ref{cor:d-ebit}, the entanglement and the initial seed are both sublinear.
Interestingly, it appears that our approach involves a tradeoff between entanglement and seed consumption, given the constraints placed on $\lambda_\gamma$ and $\lambda_\theta$ in Corollaries~\ref{cor:ebit} and \ref{cor:d-ebit}.
Indeed, equations \eqref{eq:tradeoff-entanglement} and \eqref{eq:tradeoff-completeness} imply that $\lambda_\theta+2 \lambda_\gamma<1$.
Thus if $\lambda_\theta$ is close to $1$ (corresponding to a small consumption of entanglement), $\lambda_\gamma$ must be close to $0$, which indicates a high proportion $\gamma$ of test rounds and, as a result, high consumption of random seed.
Likewise, if $\lambda_\gamma$ is close to $1/2$, $\lambda_\theta$ must be close to $0$, and the protocol requires high entanglement and low random seed.
We leave as open questions whether there is some kind of fundamental tradeoff between the required amounts of random seed and shared quantum resources, and whether the amount of quantum resources in our protocol is optimal or can be further decreased.

\paragraph{Acknowledgments.}
This work is supported by the Fondation Wiener-Anspach, the Interuniversity Attraction Poles program of the Belgian Science Policy Office under the grant IAP P7-35 photonics@be.
S. P. is a Research Associate of the Fonds de la Recherche Scientifique (F.R.S.-FNRS).
C. B. acknowledges funding from the F.R.S.-FNRS through a Research Fellowship.

\printbibliography%

\onecolumn\newpage
\appendix

\section{Proof of Lemma~\ref{lem:minentropy}}\label{app:robust}
Lemma~\ref{lem:minentropy} gives a lower bound on the conditional min-entropy in the outcome of Alice's measurement $\sa = 1$ for devices which achieve a certain success probability $\omega$ in the tilted-CHSH game of parameter $\beta$.
We restate it here:
\minentropylemma*
\begin{proof}
To derive this bound, we use our self-testing result for the tilted-CHSH inequalities \cite{BP15}.
The robustness bounds for the self-test in \cite{BP15} are rather unwieldy, so we will instead provide a crude upper bound that retains the same asymptotic behavior in $\beta$ and $\omega$ and greatly simplifies the use of the bound.

We will lower-bound $\Hmin(\Oa \mid E ; \Sa = 1)$ as a function of the expected violation of the tilted-CHSH inequality, $I = I_\quant - \epsilon$, where $I_\quant = \sqrt{8+2\beta^2} = 4/\sqrt{1+\sin^2 2\theta}$ is the maximal quantum value of the expression.
This min-entropy is equivalent to the guessing probability for the measurement $\Sa=1$, which is defined as
\begin{equation}
\label{eq:gp-minentropy}
2^{-\Hmin(\Oa \mid E ; \Sa = 1)}
	= \pguess(\Oa \mid E ; \Sa = 1)
	= \max_{\{M_g\}} \Pr[\Oa = G \mid \Sa = 1] \eqp,
\end{equation}
where $\{M_g\}$ is a POVM on the subsystem $\sE$, which an adversary would use to measure the side information contained in $\sE$ to formulate a guess $G$ for $\Oa$ \cite{KRS09}.
Formulated in terms of a given physical state $\ket{\tilde\psi}_{\sA\sB\sE}$ and observables $\tilde A_{\sa} \equiv \tilde A_{\sa} \otimes \id_\sB \otimes \id_\sE$ and $\tilde B_{\sbb} \equiv \id_\sA \otimes \tilde B_{\sbb} \otimes \id_\sE$, for a given adversary POVM $\{M_g\}$ we have
\begin{align}
\Pr[\Oa = G \mid \Sa = 1]
	&= \sum_{\oa \in \{0,1\}} \bra{\tilde\psi} \frac{\id_\sA + (-1)^{\oa} \tilde A_1}{2} \otimes \id_\sB \otimes M_\oa \ket{\tilde\psi} \\
	&= \tfrac12 + \tfrac12 \bra{\tilde\psi} \tilde A_1 \otimes \id_\sB \otimes (M_0-M_1) \ket{\tilde\psi}
	\equiv \tfrac12 + \tfrac12 \mean{\tilde A_1 C} \eqp,
\label{eq:pguess}
\end{align}
letting $C = M_0-M_1$, which is bounded as $\norm{C}_\infty \le 1$.
(From here on we will sometimes use a shorter notation where instead of e.g.\ $\id_\sA \otimes \id_\sB \otimes C$, we simply write $C$.)
We will now relate this expression to the reference system using self-testing.

Using the notation of \cite{BP15}, the self-testing result shows that any pure state $\ket{\tilde\psi}_{\sA\sB\sE}$ measured by $\tilde A_\sa$ and $\tilde B_\sbb$ in such a way that the tilted-CHSH inequality for these observables is violated up to $I_\quant - \epsilon$ obeys%
\footnote{In \cite{BP15} the subsystem $\sE$ is implicitly included in $\sA$ and/or $\sB$ as a purifying subsystem for the mixed state held and measured by the devices.
The statement can easily be modified to separate it without changing the proofs; the black-box measurement operators $\tilde A_\sa$ and $\tilde B_\sbb$ then act as the identity on $\sE$.}
\begin{subequations}
\label{eq:selftest}
\begin{align}
\norm{
	\Phi(\ket{\tilde\psi}_{\sA\sB\sE})
	- \ket{\psi_\theta}_{\sA'\sB'} \otimes \ket{\junk}_{\sA\sB\sE}
}
	&\le 2 \bar\delta
\eqp,\\
\norm{
	\Phi(\tilde A_1 \ket{\tilde\psi}_{\sA\sB\sE})
	- A_1 \ket{\psi_\theta}_{\sA'\sB'} \otimes \ket{\junk}_{\sA\sB\sE}
}
	&\le 2 \bar\delta + 2 \delta_\rma^\sA
\eqp,
\end{align}
\end{subequations}
where $\Phi = \Phi_\sA \otimes \Phi_\sB \otimes \id_\sE$ is a local isometry acting on Alice and Bob's subsystems which introduces and transforms ancillary qubits $\sA'$ and $\sB'$,
$\ket{\psi_\theta}$ is the reference state (see main text),
$A_1$ is the reference observable that yields one bit of randomness,
and the error bound parameters $\bar\delta, \delta_\rma^\sA = \bigo(\sqrt\epsilon \theta^{-4})$ are explicitly defined in \cite{BP15}.
Effectively, this isometric transformation extracts a state onto $\sA'\sB'$ which is close to a copy of the reference state and almost decorrelated from the initial system $\sA\sB\sE$.
Likewise, the isometry approximately maps the physical observables' action on the physical state in $\sA\sB$ to ideal actions on the reference state in the ancillary registers $\sA'\sB'$.

From this result, we see that the guessing probability with respect to $\ket{\tilde\psi}$ (which is by isometry identical to the guessing probability for $\Phi(\ket{\tilde\psi})$) is close to the guessing probability with respect to the reference state, for which $\pguess = \max_\oa \Pr[\Oa = \oa \mid \Sa = 1] = 1/2$ since the side information $E$ is decorrelated from the devices $\sA$ and $\sB$.

To show this approximate equality of guessing probabilities, we rewrite the last term of \eqref{eq:pguess} as
\begin{equation}
	\mean{\tilde A_1 C}
	= \Phi^\dag(\bra{\tilde\psi}) \, C \, \Phi(\tilde A_1 \ket{\tilde\psi})
\end{equation}
using that $\Phi$ is an isometry which acts like the identity on $\sE$.
We then use \eqref{eq:selftest} and the triangle inequality to transform this into $\bra{\psi_\theta} A_1 \ket{\psi_\theta} \braket{\junk | C | \junk}$, with additional error terms.
First, we replace $\Phi(\tilde A_1 \ket{\tilde\psi})$ with $A_1 \ket{\psi_\theta} \otimes \ket{\junk}$ with an additional error term $2\bar\delta + 2\delta_\rma^\sA$ since $\norm{\Phi^\dag(\bra{\tilde\psi}) \, C} \le 1$, then we replace $\Phi^\dag(\bra{\tilde\psi})$, again using $\norm{A_1}_\infty, \norm{C}_\infty \le 1$:
\begin{align}
\mean{\tilde A_1 C}
	&\le \Phi^\dag(\bra{\tilde\psi}) \, \bigl( A_1 \ket{\psi_\theta} \otimes C \ket{\junk} \bigr) + 2\bar\delta + 2\delta_\rma^\sA \\
	&\le \bra{\psi_\theta} A_1 \ket{\psi_\theta} \bra{\junk} C \ket{\junk} + 4\bar\delta + 2\delta_\rma^\sA \\
	&\le \abs{\bra{\psi_\theta} A_1 \ket{\psi_\theta}} + 4\bar\delta + 2\delta_\rma^\sA \eqp.
\end{align}
Since for the reference system $\bra{\psi_\theta} A_1 \ket{\psi_\theta} = 0$, we find that
\begin{equation}
\label{eq:pguess-delta}
	\pguess \le \tfrac12 + 2\bar\delta + \delta_\rma^\sA
\end{equation}

We now proceed to find a simple expression of $\epsilon$ and $\theta$ that upper-bounds $2\bar\delta + \delta_\rma^\sA$.
After some careful manipulation of the rather long expressions for $\bar\delta$ and $\delta_\rma^\sA$, we find
{
\newcommand{\sqs}{\sqrt{1+s^2}}
\begin{multline}
\label{eq:longbound}
  2\bar\delta + \delta_\rma^\sA
    = \sqrt{2 I_\quant} \sqrt\epsilon \Biggl[
      \frac{\sqs}{2s^2} \left( 1 + c + \sqs \right)
      + \frac{\sqs}{4s} \left( 2 - c + \sqs \right)
    \\
      + \frac{c + \sqs}{2s^2} (1+c)
        \left( 8 + 2 \frac{1+\sqs}{s^2} + 3 \tan\theta \right)
    \Biggr]
  \eqp,
\end{multline}
}%
with $c = \cos(2 \theta)$, $s = \sin(2 \theta)$, $I_\quant = 4 / \sqrt{1+s^2}$.
The dominating term in this bound for small $\theta$ comes from the term in $s^{-4}$, namely $
  2 \sqrt{2 I_\quant} \sqrt\epsilon (1+c)(c+\sqrt{1+s^2})(1+\sqrt{1+s^2}) s^{-4} = \bigo(\sqrt\epsilon \theta^{-4})
$.

A crude upper bound on \eqref{eq:longbound} is obtained by taking a $s^{-4}$ factor out of the square brackets and giving rough numerical bounds on the bounded function that remains.
For instance, the factor of $s^{-4} \sqrt{2 I_\quant} \sqrt\epsilon$ in the first term becomes $
  s^2 \sqrt{1+s^2} (1+c+\sqrt{1+s^2})/2 \le (3\sqrt2/2) s^2
$ because $
  c + \sqrt{1+s^2} = \sqrt{1-s^2} + \sqrt{1+s^2} \le 2
$ for $s^2 \in [0,1]$, and $\sqrt{1+s^2} \le \sqrt{2}$.
We obtain the following bound:
\begin{equation}
  2\bar\delta + \delta_\rma^\sA
    \le 2\sqrt2 \mleft[
      \frac{3\sqrt2}{2} s^2
      + \frac{1+\sqrt2}{2} s^3
      + 16 s^2 + 8 + 6 s^2 \tan\theta
    \mright]
    \frac{\sqrt\epsilon}{s^4}
  \eqp,
\end{equation}
where we have also bounded $I_\quant \le 4$ and used the following tight bounds after expanding the third term in the square brackets of \eqref{eq:longbound}:
\begin{gather}
  (c + \sqrt{1+s^2})(1+c) \le 4 \eqp,
  \\
    (c + \sqrt{1+s^2})(1+c)(1+\sqrt{1+s^2})
      = (1+c\sqrt{1+s^2})(2+c+\sqrt{1+s^2})
      \le 8 \eqp.
\end{gather}
The factor $\tan\theta$ in the last term is simply bounded by $1$.
The bound we reach is the following:
\begin{equation}
  2\bar\delta + \delta_\rma^\sA
    \le 2\sqrt2 \mleft[
      \frac{1+\sqrt2}{2} s^3
      + \frac{3\sqrt2 + 44}{2} s^2
      + 8
    \mright] \frac{\sqrt\epsilon}{s^4}
  \eqp.
\end{equation}
Finally, the polynomial in $s$ in square brackets is bounded by its maximum at $s=1$.
Eq.~\eqref{eq:pguess-delta} then becomes
\begin{equation}
\label{eq:shortbound}
  \pguess - \tfrac12
    \le 2\bar\delta + \delta_\rma^\sA
    \le
    (8 + 61 \sqrt2) \frac{\sqrt\epsilon}{s^4}
  \eqp.
\end{equation}
Further bounding
\begin{equation}
	s = \sin 2\theta \ge \frac{2\theta}{\pi/2} \ge \theta
\end{equation}
by concavity of the sine function on $\ccint{0,\pi/2}$ and substituting $\epsilon = (8+2\beta) (\omega_\quant - \omega)$, we reach our final bound for the guessing probability,
\begin{equation}
\pguess \le \tfrac12 + \sqrt{8+2\beta} \, (8+61\sqrt2) \, \theta^{-4} \sqrt{\omega_\quant - \omega} \eqp.
\end{equation}
Putting this together with \eqref{eq:gp-minentropy}, we find, using $\ln(1+x) \le x$,
\begin{align}
\Hmin(\Oa \mid E ; \Sa = 1)
	&\ge 1 - \log_2 \mleft( 1 + 2 \sqrt{8+2\beta} \, (8+61\sqrt2) \, \theta^{-4} \sqrt{\omega_\quant - \omega} \mright) \\
	&\ge 1 - \kappa \theta^{-4} \sqrt{\omega_\quant - \omega}
\end{align}
with $\kappa = 4 \sqrt{4+\beta} (4\sqrt2+61) / \ln 2$.
\end{proof}

A numerical maximization of the factor of $\sqrt\epsilon$ in \eqref{eq:longbound} shows that a tighter numerical factor of $45.13$ could replace the numerical factor $8+61\sqrt2 = 94.27$ in \eqref{eq:shortbound}, or less if the range of $\theta$ is limited.

\section{Proof of Theorem~\ref{thm:minentropy}}\label{app:EAT}
It is shown in \cite{DFR16,AFVR16} that obtaining a bound on the smooth min-entropy produced by a generic protocol of the type that we consider here reduces to finding a \emph{min-tradeoff function}, a certain function that bounds the randomness produced in an average round of the protocol.
This function is specific to the particular game used in the protocol and obtaining it for the tilted-CHSH game is the only part of the general analysis of \cite{AFVR16} that we need to tailor to our situation.

The min-tradeoff function is defined as follows.
Any protocol round (i.e., step~\ref{item:protocol:round} in the protocol; see Section~\ref{sec:protocol}) can be thought of as a quantum channel $\mathcal{G}_i$ mapping the state $\rho=\rho_{DE}$ of the pair of devices $D$ and the adversary information $E$ before that round to the resulting state $\mathcal{G}_i(\rho)=\rho'=\rho'_{\Oa_i\Ob_i\Sa_i\Sb_i T_iC_i D'E}$ after the protocol round, which also includes explicitly the classical data that was produced in that round.
In particular, the channel $\mathcal{G}_i$ and the initial state $\rho$ determine the probability distribution $p \equiv (p_0, p_1, p_\bot)$ for the classical random variable $C_i\in\{0,1,\perp\}$.
This probability distribution is related to the randomness produced in the protocol round: from Lemma~\ref{lem:minentropy} we expect that a pair of devices which succeeds at any round ($C_i = 1$) with higher probability produces more entropy in its outputs.
The min-tradeoff function is a function $f_\rmin(p)$ that bounds the randomness produced in the protocol round solely on the basis of the probability distribution $p$.
Formally, a function $f_\rmin(p)$ is a min-tradeoff function if it satisfies
\begin{equation}
	f_\rmin(p) \le H(\Oa_i\Ob_i \mid \Sa_i\Sb_i T_i E)_{\mathcal{G}_i(\rho)} \eqp,
\end{equation}
where $H(\Oa_i\Ob_i \mid \Sa_i\Sb_i T_i E)_{\mathcal{G}_i(\rho)}$ is the von Neumann entropy of the joint outputs conditioned on the classical side information produced in the round and on the quantum information of the adversary $E$.
This inequality should hold for all channels $\mathcal{G}_i$ that are compatible with the protocol and for all initial states $\rho$ such that the variable $C_i$ in $\mathcal{G}_i(\rho)$ is distributed as $p$.

Theorem~\ref{thm:minentropy}, which we restate here, follows from the entropy accumulation theorem \cite{DFR16} and its application to randomness generation protocols by Arnon-Friedman et al.~\cite{AFVR16}.
Our proof follows that of \cite{AFVR16}, in which we substitute a min-tradeoff function adapted to our protocol.
\minentropytheorem*
\begin{proof}[Proof of Theorem~\ref{thm:minentropy}]
We first note that $C_i = \bot$ happens if and only if the protocol round is a generation round, hence we always have $p_\bot = 1-\gamma$ and $p_0 + p_1 = \gamma$.
Thus, as noted in \cite{AFVR16}, we are free to define $f_\rmin(p)$ to arbitrary values when $p_0 + p_1 \ne \gamma$, since such a distribution for $C_i$ is not compatible with our protocol anyway.
On the other hand, when $p_0 + p_1 = \gamma$, the expected probability of succeeding at the nonlocal game in a game round is $p_1 / \gamma$.
In that case, we can use Lemma~\ref{lem:minentropy} with $\omega = p_1/\gamma$ to set the value of $f_\rmin$.
Indeed,
\begin{align}
	H(\Oa_i\Ob_i \mid \Sa_i\Sb_i T_i E)
	&\geq H(\Oa_i \mid \Sa_i \Sb_i T_i E) \label{eq:vn-trace} \\
	&=    H(\Oa_i \mid \Sa_i E) \label{eq:vn-flag} \\
	&\ge \Pr[\Sa_i = 1] \, H(\Oa_i \mid E;\Sa_i = 1) \\
	&\ge \nu \, g(\omega) \eqp,\label{eq:bound}
\end{align}
with $\nu = \Pr[\Sa_i = 1] = 1 - \gamma (2+\beta)/(4+\beta)$.
In \eqref{eq:vn-trace}, we used the chain rule and the positivity of the conditional entropy of classical information.
In \eqref{eq:vn-flag}, we used that Alice's output $\Oa_i$ is independent of Bob's measurement choice $\Sb_i$ and of the round flag $T_i$.
To get \eqref{eq:bound}, we used Lemma~\ref{lem:minentropy} and the fact that the conditional min-entropy lower-bounds the conditional von Neumann entropy \cite[Proposition~4.3]{Tom12}.

We can thus define $f_\rmin(p)=\nu\,g(p_1/\gamma$) when $p_0+p_1=\gamma$.
For convenience, we set it to the same value when $p_0+p_1\neq\gamma$, since it can be freely chosen in that case.
All in all,
\begin{equation}
\label{eq:cvx-mintr}
	f_\rmin(p) = \nu\,g(p_1/\gamma) \eqp.
\end{equation}

The EAT \cite{DFR16} requires \emph{affine} min-tradeoff functions.
Since $g$ is convex, we can simply obtain affine lower bounds of \eqref{eq:cvx-mintr} by taking its tangent at any point.
The tangent of $g(\omega)$ at the point $\omega = \omega_\tangent$ is given by
\begin{equation}
	\bar g_{\omega_\tangent}(\omega) = 1 - \kappa \theta^{-4} \frac{2\omega_\quant - \omega_\tangent - \omega}{2\sqrt{\omega_\quant - \omega_\tangent}} \eqp,
\end{equation}
hence the min-tradeoff function we finally use will be $f_\rmin(p) = \nu\,\bar g_{\omega_\tangent}(p_1/\gamma)$ for some appropriately chosen $\omega_\tangent$.

Given such a min-tradeoff function, Lemma~9 of \cite{AFVR16} then states that for any given pair of devices, either the protocol succeeds with low probability $\Pr[\success] \le \epsilon'$, or
\begin{equation}
	\Hmin^{\epsilon_\smooth}(\vec\Oa \vec\Ob \mid \vec\Sa \vec\Sb \vec T E ; \success) \ge \kround \nu\,\bar g_{\omega_\tangent}(\omega_\quant-\xi) - \mu \sqrt \kround
\end{equation}
with
\begin{equation}
	\mu = 2 \mleft( \log_2(13) + \ceil{\norm{\nabla f_\rmin}_\infty} \mright) \sqrt{1-2 \log_2(\epsilon_\smooth \epsilon')} \eqp.
\end{equation}
The gradient of the min-tradeoff function is simply the slope of $\nu\,\bar g_{\omega_\tangent}(p_1/\gamma)$:
\begin{equation}
	\norm{\nabla f_\rmin}_\infty = \gamma^{-1} \nu \frac{\kappa \theta^{-4}}{2 \sqrt{\omega_\quant - \omega_\tangent}} \eqp.
\end{equation}
Bounding the ceiling function as $\ceil x \le x + 1$ and optimizing over the point of tangency $\omega_\tangent$ produces the final expression \eqref{eq:hmin} for the min-entropy bound of the theorem.
\end{proof}

\section{Proof of Lemma~\ref{lem:dilution}}\label{app:dilution}

In Section~\ref{sec:dilution}, we sketched the entanglement dilution procedure of Bennett et al.~\cite{BBPS96}, which defines a channel $\mathcal D_{\theta,\delta}$ that approximately dilutes $\ket{\phi^+}^{\otimes \kebit}$ into $\ket{\psi_\theta}^{\otimes n}$, with $\kebit < \kround \simeq \kebit/S(\theta)$.

In this appendix, we describe this procedure in detail then prove Lemma~\ref{lem:dilution}, which bounds its inherent error terms.
We restate the Lemma here:
\dilutionlemma*

As stated in the main text, dilution is enabled by the possibility of compressing a number of weakly entangled states into a smaller Hilbert space at the cost of a small error.
One procedure that realizes this is known as Schumacher compression \cite{Sch95} and is explained in great detail in \cite{Wil13}, from which we borrow the notation.

We will apply Schumacher compression to the second half of the global state of $\kround$ weakly entangled states $
  \ket{\psi_\theta}^{\otimes \kround}_{\sA\sA'}
    = ( \cos\theta \ket{00} + \sin\theta \ket{11} )^{\otimes \kround}
$.
This allows us to use $\kebit < \kround$ maximally entangled qubit pairs to transport that second half from box $\sA$ to box $\sB$ so that the initial $\kebit$ singlets are effectively transformed into $\kround$ weakly entangled pairs after decompression.

Schumacher compression works on a source of pure states $\{\ket{\psi_i},q_i\}$ which outputs the states in $\{\ket{\psi_i}\}$ at random, with respective probabilities $\{q_i\}$.
The goal is to pack the information output by $\kround$ i.i.d.\ uses of the source into a smaller Hilbert space in a way that makes it recoverable later with high fidelity---%
a generalization of Shannon's source coding to the quantum setting.
We interpret the reduced density operator $\sigma_{\sA'} = \tr_{\sA} \projop{\psi_\theta}_{\sA\sA'}$ for the second half of one pair $\ket{\psi_\theta}_{\sA\sA'}$ as describing a quantum source of mutually orthogonal states---namely, the eigenstates of $\sigma$---with probabilities given by the corresponding eigenvalues.
The eigenstates of $\sigma$ coincide with the computational basis: they are $\ket 0$ and $\ket 1$, with corresponding eigenvalues $q_0 = \cos^2 \theta$ and $q_1 = \sin^2 \theta = 1-q_0$.
Using this source $\kround$ times gives us the mixed state $\rho_{\sA'} = \sigma_{\sA'}^{\otimes \kround} = \tr_{\sA} \projop{\psi_\theta}^{\otimes \kround}$, i.e., the reduced state of Bob's share of the set of $\kround$ entangled pairs prepared by Alice.
The eigenstates of $\rho_{\sA'}$ are the computational basis states for $\kround$ qubits, which we write as $\ket{y}$  with $y \in \{0, 1\}^\kround$, where $\ket{y}$ is the tensor product of $\kround$ qubits $\ket{y_1} \dotsm \ket{y_\kround}$.
The eigenvalue associated with $y$ is $\lambda_j = (\cos^2\theta)^j (\sin^2\theta)^{\kround-j}$ for $j = n(0\mid y)$, which gives the number of zeros in the binary string $y$.

Compressing a source of orthogonal states is more or less equivalent to Shannon source coding.
The idea is to consider the string $y$ obtained after $\kround$ uses of the source and only let it through when it is deemed ``typical'' enough.
According to the theory of typical sequences \cite{CT12}, there is a family of subsets of the $2^\kround$ strings $y$, called the $\delta$-typical subsets, that each contain an exponentially small fraction of strings but nevertheless have an exponentially large probability weight.
Thus, while most strings we obtain are typical, their number is considerably smaller than $2^\kround$.
The $\delta$-typical subset is defined as
\begin{equation}
\label{eq:typical}
  \mathcal T_\delta
    = \left\{
        y \in \{0,1\}^\kround :
          S - \delta \le \frac{-\log_2 P(y)}{\kround} \le S + \delta
      \right\}
  \eqp,
\end{equation}
where $S$ is the entropy of the source (which is also the entropy of entanglement of a single pair $\ket{\psi_\theta}$),
\begin{align}
  S = h_2(q_0)
    &= -q_0 \log_2(q_0) - q_1 \log_2(q_1) \\
    &= - \log_2(q_0) + q_1 \Delta \eqp,
\end{align}
with $\Delta = \log_2(q_0/q_1) = -\log_2 \tan^2\theta$.
Thus, a sequence $y$ is $\delta$-typical if and only if its sample entropy $-(1/\kround) \log_2 P(y)$ is $\delta$-close to $S$.
Since our ideal source is i.i.d., each random variable $Y_i$ is distributed independently according to the same Bernoulli distribution of probabilities $\{q_0, q_1\}$.
Thus, the sample entropy for a given value $y$ can be rewritten as
\begin{align}
-\frac1\kround \log_2 P(y)
  &=
  \frac1\kround \sum_{i=1}^\kround -\log_2 q_{y_i} \\
    &= -\frac{n(0\mid y)}{\kround} \log_2 q_0
      - \frac{n(1\mid y)}{\kround} \log_2 q_1
  \\
    &= -\log_2 q_0 + \frac{n(1\mid y)}{\kround} \Delta \eqp.
\end{align}
Hence, the definition of the typical set \eqref{eq:typical} can be rewritten as
\begin{equation}
\label{eq:typicalvar}
  \mathcal T_\delta
    = \left\{
        y \in \{0,1\}^\kround :
          q_1 - \frac\delta\Delta
            \le \frac{1}{\kround} \sum_{i=1}^\kround y_i \le q_1 + \frac\delta\Delta
      \right\}
  \eqp.
\end{equation}
That is, a sequence $y$ is $\delta$-typical if and only if it has a frequency of $1$'s that is $(\delta/\Delta)$-close to the expected value $q_1$.

Properties of the typical set are easily derived from those two expressions of $\mathcal T_\delta$ \cite{CT12}.
First, applying Hoeffding's inequality \cite{Hoe63} for the binomially-distributed $\sum_i Y_i$ shows that the typical set has a high probability weight
\begin{equation}
\label{eq:typicalprob}
  \Pr[\mathcal T_\delta]
    \ge 1 - \epsilon_\proj
  \eqp,
\end{equation}
where we defined the \emph{projection error}
\begin{equation}
  \epsilon_\proj = 2 \exp(-2 \kround \delta^2/\Delta^2) \eqp,
\end{equation}
which is an upper bound on the probability of a sequence being atypical.

Secondly, in contrast to this first property, the typical set has a relatively low cardinality: the number of typical sequences is exponentially small compared to the total number of sequences.
Indeed, from the definition \eqref{eq:typical},
\begin{align}
  \abs{\mathcal T_\delta}
    &=  2^{\kround(S+\delta)} \sum_{y\in\mathcal T_\delta} 2^{-\kround(S+\delta)}
  \\&
    \le 2^{\kround(S+\delta)} \sum_{y\in\mathcal T_\delta} P(y)
    \le 2^{\kround(S+\delta)} \eqp,
\label{eq:typicalcard}
\end{align}
which is much smaller than the total number of $2^\kround$ sequences if $S < 1-\delta$ and $\kround$ is high.

Source coding consists in discarding atypical sequences, which occur with low probability, and encoding typical sequences into smaller codewords.
This encoding is possible because of the small cardinality of the typical set: a sequence can simply be encoded by its index within a given ordering of the elements of the typical set, which gives binary codewords a length of at most $\log_2 \abs{\mathcal T_\delta} \le \kround(S+\delta) \equiv \kebit$.

Schumacher compression applies this procedure to the quantum state $
  \rho_{\sA'}
  = \sum_{y \in \{0,1\}^\kround} \lambda_{n(0 \mid y)} \projop{y}
$, which describes the output of a quantum source of pure states $\ket y$ with probabilities $\lambda_{n(0 \mid y)}$.
In order to identify an atypical state, a projective \emph{typicality measurement} is performed, with projectors $\{\Pi_\delta, \id - \Pi_\delta\}$ where
\begin{equation}
  \Pi_\delta = \sum_{y\in\mathcal T_\delta} \projop{y} \eqp.
\end{equation}
If the typicality measurement succeeds, the state ends up in the typical subspace spanned by $\{ \ket{y} : y \in \mathcal T_\delta\}$, and can be encoded in a $2^\kebit$-dimensional Hilbert space by an invertible isometric map $V$.
If instead the measurement fails, a given typical state $\tau$ is substituted and encoded the same way.
The resulting state is therefore $
  \mathcal C(\rho_{\sA'})
    = V [ \Pi_\delta \rho_{\sA'} \Pi_\delta
      + (1-\tr(\Pi_\delta \rho_{\sA'})) \, \tau ] V^\dag
$.

The two properties \eqref{eq:typicalprob} and \eqref{eq:typicalcard} of the typical set can be expressed in terms of the typical projector $\Pi_\delta$:
\begin{gather}
  \label{eq:qtypicalprob}
  \tr(\Pi_\delta \rho_{\sA'}) \ge 1-\epsilon_\proj \eqp,
  \\
  \label{eq:qtypicalcard}
  \tr(\Pi_\delta) \le 2^{\kround(S+\delta)} \eqp.
\end{gather}
The first property can be used in the gentle operator lemma \cite{Win99,ON02} to show that a successful typicality measurement does not disturb the state by much \cite{Wil13}:
\begin{equation}
  \norm{\Pi_\delta \rho_{\sA'} \Pi_\delta - \rho_{\sA'}}_1
    \le 2 \sqrt{\epsilon_\pi} \eqp.
\end{equation}
Hence, the decompressed state is close in trace distance to the original \cite{Wil13}:
\begin{equation}
  \norm[\big]{
    V^\dag \mathcal C(\rho_{\sA'}) V
    - \rho_{\sA'}
  }_1
    \le 2\sqrt{\epsilon_\proj} + \epsilon_\proj
  \eqp.
\end{equation}

As Schumacher originally noted \cite{Sch95}, this remains true when we consider the global state in $\sA\sA'$; the entanglement of the state is therefore not destroyed by compression:
\begin{equation}
  \norm[\big]{
    (\id \otimes V^\dag)
      ({\idmap} \otimes \mathcal C) [\projop{\psi_\theta}^{\otimes \kround}]
      (\id \otimes V)
    - \projop{\psi_\theta}^{\otimes \kround}
  }_1
    \le 2\sqrt{\epsilon_\proj} + \epsilon_\proj
  \eqp.
\end{equation}

Bennett et al.'s dilution procedure simply results from the composition of this compression with quantum teleportation.
The proof of Lemma~\ref{lem:dilution} is therefore immediate:
\begin{proof}[Proof of Lemma~\ref{lem:dilution}]
Defining $
  \mathcal D_{\delta,\theta}(\projop{\phi^+}^{\otimes \kebit}_{\sA''\sB})
$ to be the outcome of the composition of a local preparation of $\projop{\psi_\theta}^{\otimes \kround}_{\sA\sA'}$, followed by Schumacher compression over $\delta$-typical sequences of $\sA'$, teleportation from $\sA'$ to $\sB$ using $\projop{\phi^+}^{\otimes \kebit}_{\sA\sB}$ and decompression on $\sB$, Lemma~\ref{lem:dilution} follows.
\end{proof}

\end{document}